\newtheorem{invariant}{Invariant}
\newtheorem{observation}{Observation}
\newcommand{\cvec}{{\bf c}}
\newcommand{\dvec}{\bm{\delta}}
\newcommand{\pvec}{\bm{\phi}}
\newcommand{\head}{\ensuremath{\mathit head}}
\newcommand{\tail}{\ensuremath{\mathit tail}}
\newcommand{\rev}[1]{\ensuremath{\mbox{\it rev}\,(#1)}}
\newcommand{\leftd}[1]{\ensuremath{\text{left}\,(#1)}}
\newcommand{\rightd}[1]{\ensuremath{\text{right}\,(#1)}}
\newcommand{\xty}[2]{\ensuremath{#1}-to-\ensuremath{#2}}
\newcommand{\pstart}[1]{\ensuremath{\mathit{start}(#1)}}
\newcommand{\pend}[1]{\ensuremath{\mathit{end}(#1)}}
\newcommand{\ccw}{c.c.w.\ }
\newcommand{\cw}{c.w.\ }
\newcommand{\wrt}{w.r.t.\ }
\newcommand{\fvec}{{\bf f}}
\def\snip{\mathbin{\raisebox{0.15ex}{\rotatebox[origin=c]{60}{\Rightscissors}\!}}}
\newcommand{\keywords}[1]{\par\addvspace\baselineskip
\noindent\keywordname\enspace\ignorespaces#1}
\begin{document}

\mainmatter  % start of an individual contribution

% first the title is needed
\title{Maximum $st$-flow in directed planar graphs via shortest paths}

% a short form should be given in case it is too long for the running head
\titlerunning{Maximum $st$-flow in directed planar graphs via shortest paths}

% the name(s) of the author(s) follow(s) next
%
% NB: Chinese authors should write their first names(s) in front of
% their surnames. This ensures that the names appear correctly in
% the running heads and the author index.
%
\author{Glencora Borradaile \and Anna Harutyunyan}
%
%\authorrunning{Lecture Notes in Computer Science: Authors' Instructions}
% (feature abused for this document to repeat the title also on left hand pages)

% the affiliations are given next; don't give your e-mail address
% unless you accept that it will be published
 \institute{Oregon State University}
% Tiergartenstr. 17, 69121 Heidelberg, Germany\\
% \mailsa\\
% \mailsb\\
% \mailsc\\
% \url{http://www.springer.com/lncs}}

%
% NB: a more complex sample for affiliations and the mapping to the
% corresponding authors can be found in the file "llncs.dem"
% (search for the string "\mainmatter" where a contribution starts).
% "llncs.dem" accompanies the document class "llncs.cls".
%

%\toctitle{Lecture Notes in Computer Science}
%\tocauthor{Authors' Instructions}
\maketitle

\begin{abstract}
  Minimum cuts have been closely related to shortest paths in planar
  graphs via planar duality -- so long as the graphs are undirected.
  Even maximum flows are closely related to shortest paths for the
  same reason -- so long as the source and the sink are on a common
  face.  In this paper, we give a correspondence between maximum flows
  and shortest paths via duality in {\em directed} planar graphs with
  no constraints on the source and sink.  We believe this a promising
  avenue for developing algorithms that are more practical than the
  current asymptotically best algorithms for maximum $st$-flow.

  \keywords{maximum flows, shortest paths, planar graphs}
\end{abstract}

\section{Introduction}

The asymptotically best algorithm for max $st$-flow in directed planar
graphs is due to Borradaile and Klein~\cite{BK09}.  The algorithm is
the {\em leftmost augmenting-path algorithm}; in each iteration, flow
is pushed along the leftmost\footnote{Formal definitions are delayed
  until the next section.}  path from $s$ to $t$.  There are at most
$2n$ iterations~\cite{Erickson10} and, using a dynamic-trees data
structure, each iteration can be implemented in $O(\log n)$ time
(e.g.~\cite{ST83}).  The algorithm is, in fact, a generalization of
the algorithm first suggested by Ford and Fulkerson in their seminal
{\em Max-Flow, Min-Cut} paper for the planar case when $s$ and $t$ are
on a common face, or $st$-planar~\cite{FF56}.  However, in the
$st$-planar case, each iteration of the leftmost augmenting-path
algorithm can be implemented in $O(\log n)$ time using priority
queues~\cite{Hassin81}.

Priority queues are arguably simpler and more practical than dynamic
trees.  In fact, Tarjan and Werneck have shown experimentally that a
na\"ive, linear-time-per-operation implementation of dynamic-trees
outperforms more sophisticated logarithmic-time implementations in
many scenarios~\cite{TW10}.  The reason that priority queues are
sufficient for the $st$-planar case is because the leftmost-augmenting
paths algorithm, in this case, can be implemented as Dijkstra's
algorithm in the dual graph.  We propose and pursue the following:
\begin{conjecture}\label{conj}
  An augmenting-path algorithm for max $st$-flow in directed planar
  graphs can be implemented with $O(n)$ queries to a priority-queue.
\end{conjecture}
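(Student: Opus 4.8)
The plan is to realise the leftmost augmenting-path algorithm as a sequence of single-source shortest-path computations in planar graphs — generalising Hassin's implementation of the $st$-planar case — and to show that across the whole execution only $O(n)$ priority-queue operations occur. The first step is the duality reduction. Fix a simple $s$-to-$t$ path $\pi$ in $G$ and cut the plane along it; in the resulting embedding $s$ and $t$ lie on a common face, and inserting an artificial arc from $t$ to $s$ through that common face distinguishes its two incident dual faces as vertices $s^*$ and $t^*$ of the dual. For any feasible flow $f$, the leftmost augmenting $s$-to-$t$ path in the residual graph $G_f$ is then exactly the shortest $s^*$-to-$t^*$ path in the dual $G_f^*$ under residual-capacity lengths $\length{\cdot}$, and pushing one unit along it amounts to a local update of the dual shortest-path distances $d(\cdot)$ from $s^*$. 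The one feature absent from the $st$-planar case is that cutting along $\pi$ bounds the flow crossing $\pi$; I would absorb this by letting $\pi$ carry flow in both directions and tracking its residual as an extra coordinate, so that the correspondence ``max flow $\leftrightarrow$ dual distances'' holds for the true optimum of $G$.

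With the reduction in place, the algorithm maintains a \emph{feasible} dual labelling $d(\cdot)$ — one with $d(v)\le d(u)+\length{uv}$ on every residual arc $uv$ of $G_f^*$ — and implements each augmentation as a Dijkstra-style sweep confined to the region whose labels actually change: extract the affected dual vertices from the priority queue in label order, relax their outgoing residual arcs, and re-insert any neighbour whose label is thereby forced up. The $O(n)$ bound then rests on a monotonicity invariant in the spirit of Dijkstra-based reduced-cost schemes: the labels only ever \emph{rise}, and — this is the quantitative heart, and the combinatorial content underlying Erickson's $O(n)$-iteration bound — the sequence of leftmost augmenting paths sweeps across $G$ so that each face is crossed only $O(1)$ times; hence each dual vertex's label is revised $O(1)$ times over the whole run, for $O(n)$ priority-queue operations in total.

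The main obstacle I anticipate is justifying this monotonicity in the \emph{directed} setting. In undirected or $st$-planar instances the dual is itself a nonnegatively weighted planar graph and Dijkstra applies verbatim, but here saturating the leftmost path simultaneously destroys residual arcs (which can only raise distances) and creates reverse residual arcs (which, a priori, could create shortcuts and \emph{lower} a distance that was already finalised). I would attack this by exploiting the leftmost choice itself: the reverse arcs produced all lie along the leftmost path, hence weakly to the left of every shortest-path tree computed so far, and a no-crossing/homotopy argument should show that such arcs can never shorten a finalised $s^*$-to-$v$ distance. Pinning down that ``leftmost in the primal'' is precisely ``shortest-path-tree order preserved in the dual'' is, I expect, the crux — it is exactly the directed analogue of the planar-duality correspondence promised in the abstract — and once it is in hand Conjecture~\ref{conj} follows by the routine accounting for Dijkstra-like algorithms.
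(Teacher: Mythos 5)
The statement you are trying to prove is labelled as a \emph{conjecture} in the paper, and the paper does not prove it. What the paper actually establishes is weaker: the leftmost augmenting-path algorithm can be reduced to an $st$-planar max-flow computation (i.e.\ to dual shortest paths) in a covering graph containing $O(\phi)$ copies of $G$, so the algorithmic consequence is $O(\phi)$ shortest-path computations, i.e.\ $O(n\phi)$ priority-queue operations, not $O(n)$. Your proposal therefore cannot be checked against a proof in the paper; instead I have to assess it on its own terms, and it has two genuine gaps.

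First, cutting the plane along a fixed simple $s$-to-$t$ path $\pi$ does not reduce the problem to an $st$-planar instance. In $G\snip\pi$ the two copies $\pi_L$ and $\pi_R$ are disconnected along the cut, so flow paths that wind around $s$ and cross $\pi$ are simply absent; the maximum flow in $G\snip\pi$ can be strictly smaller than that in $G$ (by up to the total capacity crossing $\pi$), and there is no longer a clean correspondence between leftmost augmenting paths in $G_f$ and shortest $s^*$-to-$t^*$ paths in the dual of the cut-open graph. Your remark that you would ``let $\pi$ carry flow in both directions and track its residual as an extra coordinate'' is not a construction: it is exactly the difficulty that the paper resolves by passing to the covering graph, where crossings of $\pi$ become transitions between adjacent copies, and the Pigeonhole Lemma caps the number of copies any simple augmenting path can traverse at $\phi$. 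That machinery is what makes the duality correspondence survive when $s$ and $t$ are not cofacial, and it costs a $\Theta(\phi)$ blow-up which your proposal silently assumes away.

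Second, the accounting that gets you from ``Dijkstra-style local updates'' to $O(n)$ total queue operations rests on the assertion that ``each dual vertex's label is revised $O(1)$ times over the whole run.'' This is not a consequence of Erickson's $O(n)$-iteration bound: that bound counts augmentation steps (pivots), not dual-label changes, and a single augmentation can move the shortest-path wavefront past $\Theta(n)$ faces. No bound better than $O(n)$ label changes per iteration is known, which is why the dynamic-tree implementation is used in the first place. You do correctly identify the monotonicity obstruction — that saturating an arc creates a reverse residual arc which, a priori, can shorten an already-finalised dual distance — and you correctly call it ``the crux,'' but you offer only the hope that a no-crossing/homotopy argument will rule it out. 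In the directed setting this is precisely the open problem the paper's Conjecture~\ref{conj} encodes; asserting that leftmost choice preserves finalised distances is restating the conjecture, not proving it. To close the gap you would need either a proof that the leftmost dual labels are monotone non-decreasing across the entire run \emph{and} that the total increase is discretised so that each vertex is extracted $O(1)$ times, or an entirely different charging scheme; neither appears in your proposal.
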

In this paper, we make progress toward this conjecture by showing that
the leftmost augmenting-path algorithm in the general case can be
reduced to the $st$-planar case in a covering graph.

\subsection{Related results}

The relationship between shortest paths, minimum cuts and maximum
flows in planar graphs has been studied from very early in the history
of maximum flow algorithms. Whitney showed that a minimum $st$-cut
corresponds to a shortest $st$-separating cycle in the graph's
dual~\cite{Whitney1933}.  Hassin showed that for $st$-planar graphs,
by splitting the face common to $s$ and $t$ in two (becoming two
vertices in the dual), finding a shortest-path tree in the dual
(rooted at one of the new vertices\footnote{For details of Hassin's
  algorithm, see Appendix~\ref{app:max-flow-shortest}.}) provides all
the information needed for the maximum $st$-flow~\cite{Hassin81}.

When $s$ and $t$ are not on the same face, the correspondence between
shortest paths and maximum flows is weaker.  In fact, the known
relationships are indirect, reducing max flow or min cut to a {\em
  sequence} of $O(n)$ or $O(\phi)$ $st$-planar max flow or min cut
computations.  Reif showed that when the graph is undirected, there is
a set of at most $n$ nesting, separating cycles in the dual, one of
which is the minimum $st$-separating cycle~\cite{reif83}; these cycles
can be found via divide and conquer, each with a shortest-path
computation.  Hassin and Johnson showed that the shortest-path
distances computed in Reif's algorithm are sufficient to reconstruct a
maximum flow~\cite{HJ85}.  Kaplan and Nussbaum reduced the number of
cycles required for Reif's algorithm to at most $\phi$, where $\phi$
is the minimum number of faces that any $s$-to-$t$ curve must pass
through~\cite{KN11}.  Unfortunately, the $\phi$ shortest-path tree
distances computed are insufficient for reconstructing the flow.  The
only result for maximum $st$-flows or minimum $st$-cuts in directed
planar graphs that utilizes shortest-path computations in the dual, to
our knowledge, is an approach used by Itai and Shiloach~\cite{IS79}
and Johnson and Venkatesan~\cite{JV83} which infeasibly routes flow
along an artificial path of length $\phi$ and reduces the problem to
$\phi$, sequentially-computed $st$-planar maximum flows.  (A summary
of these results is provided in Table~\ref{tab:history} in
Appendix~\ref{app:table}.)

\subsection{Overview}

In Section~\ref{sec:universal-cover}, we define a covering graph by
cutting open a cylindrical embedding of the graph along an $s$-to-$t$
curve and pasting copies of the resulting plane together.  This
covering graph is similar to that used by Erickson~\cite{Erickson10}
for presenting a simpler proof of the bound for Borradaile and Klein's
algorithm.  In Section~\ref{sec:upperm-paths-univ}, we show that the
leftmost flow in the covering graph that contains $O(\phi)$ copies of
the original graph {\em contains} the leftmost flow in the original
graph.\footnote{We need to first find the value of the flow, but
  finding the value of the flow uses the same techniques.}

In terms of running time, our algorithm does not improve on that of
Johnson and Venkatesan's result mentioned above~\cite{JV83}.  Further,
the space requirement for implementing our algorithm would be an
$O(\phi)$ factor higher.  However, the algorithmic technique behind
our algorithm is amenable to the implementation of an augmenting-path
algorithm using only priority queues (and not dynamic trees).  We
discuss this more in Section~\ref{sec:discuss} and provide more
technical evidence for this avenue of research in
Appendices~\ref{app:convergence} and~\ref{sec:algo}.  The algorithms
in these appendices do not suffer from the additional space
requirement.  In fact, the algorithm in Appendix~\ref{sec:algo} gives
an implementation of Borradaile and Klein's algorithm which results in
at most $(2\phi+2)n$ priority queue queries (or at most $2\phi+2$
shortest-path computations).  Unlike other algorithms depending on the
parameter $\phi$, this algorithm does not need to know the value of
$\phi$, and may use many fewer iterations.

We cannot fail to mention that shortest paths (and so, $st$-planar
flows) can be computed in $O(n)$ time~\cite{HKRS97}.  However, this
linear-time algorithm is very complicated and, to our knowledge, has
never been implemented -- it is very likely that Dijsktra's algorithm
would prove to be more practical in most situations.  Further,
Dijkstra's algorithm is more amenable to modifications; in a companion
paper, we use a modification of Dijkstra's algorithm for planar graphs
in a situation where the linear-time algorithm cannot be used.

\subsection{Background} % checked pretty happy

% We use the usual definitions for graphs and paths and only describe
% here those notions particularly relevant to this paper or where we may
% be straying from convention.  For more detailed background on graphs,
% flow, and planarity, please see references~\citen{AMO93}
% and~\citen{Diestel05}. 

The graphs of this paper are directed, but we consider the underlying
undirected graph.  Each edge of the undirected graph is composed of
oppositely-directed {\em darts} each oriented from {\em tail} to {\em
  head}; $\rev \cdot$ maps a dart to its reverse.  Capacities $\cvec$
are non-negative and defined over the set of darts.\footnote{As the
  original graph is directed, $\cvec[d]$ need not be equal to
  $\cvec[\rev{d}]$.  For a directed edge $a$ of the original graph
  with capacity $c$, $\cvec[d] = c$ and $\cvec[\rev{d}] = 0$ where $d$
  is the dart in the underlying undirected graph in the direction of
  $a$. }  Paths and cycles are sequences of darts, and so are
naturally directed.  Paths may visit the same vertex multiple times;
those that do not are {\em simple}.  A path may be trivial in which
case it is a single vertex.  We use $X[a,b]$ to denote subpath of path
or cycle $X$ that include the endpoints $a$ and $b$.  We use $\circ$
to denote the concatenation of paths.

We extend any function or property on elements to sets of elements in
the natural way.

\subsubsection{Flows}

A {\em flow assignment} $\fvec$ is an assignment of real numbers to
the darts of $G$ that is antisymmetric, i.e., $\fvec[d] = -
\fvec[\rev{d}]$.  A {\em pseudoflow} is a flow assignment that {\em
  respects the capacity} of every dart $d$: $\fvec[d]\leq
c[d]$~\cite{Hochbaum08}.  The {\em net flow} of a vertex $v$ is the
sum of the flow of darts whose head is $v$. Vertices with positive net
flow are {\em excess vertices}; vertices with negative net flow are
{\em deficit vertices}.  A pseudoflow with zero net flow at every
vertex is a {\em circulation}.  An {\em $ST$-flow} or simply, a {\em
  flow}, is a pseudoflow whose only deficit vertices are in $S$, the
sources, and only excess vertices are in $T$, the sinks.  We consider
mostly the single-source, single-sink case in which $S = \{s\}$ and $T
= \{t\}$ and which we refer to as an $st$-flow.  The value of a flow
is the sum net flows of its sources and is denoted $|\fvec|$.

Given capacities $\cvec$ and a flow assignment $\fvec$, the residual
capacity of dart $d$ is $\cvec_\fvec[d] = \cvec[d] - \fvec[d]$.  A
dart $d$ is {\em residual} if $\cvec_\fvec[d] > 0$.  We say that a set
of darts is {\em saturated} if one dart in the set is non-residual
(giving saturated path, cycle, etc.).  Residuality is with respect to a set
of capacities; those capacities may be the original capacities or the
residual capacities as indicated in the text.

\begin{quote}
  A flow is maximum if and only if there are no residual
  source-to-sink paths~\cite{FF56}.  

  A pseudoflow is maximum if and only if there are no residual paths
  from a source or excess vertex to a deficit or sink
  vertex~\cite{Hochbaum08}.
\end{quote}

We say that a set of darts are {\em flow darts} if each dart in the
set has positive flow (giving flow path, cycle, etc.).  We say that a
flow assignment is acyclic if there are no flow cycles.  {\em
  Reducing} the flow on a set of flow darts is the operation of
reducing the flow assignment on those darts by a fixed amount (and
changing the flow assignment on the reverse of those darts
accordingly). Acyclic, maximum pseudoflows can be converted into
maximum flows by reducing the flow on source-to-excess and
deficit-to-sink flow paths until all but the sources and sinks are
balanced.  Although we do not use this fact algorithmically, this can
be done in linear time~\cite{JV82,BKMNW11}.

\subsubsection{Planar graphs}  

We use the usual definitions for a planar embedded graph $G$ and its
dual $G^*$.  See Ref.~\cite{Diestel05}, e.g., for formal definitions.

% A planar graph is a graph for which there exists a planar embedding. A
% planar embedding of a graph is the drawing of the graph on the plane
% (or the surface of a sphere), so that vertices are mapped to distinct
% points, and edges are mapped to non-crossing curves. A set of
% contiguous points in the plane (or on the sphere) that are not in the image
% of the vertices or arcs is a {\em face}. For an embedding on the
% plane, there is one {\em infinite face}. For an embedding on the
% sphere, an arbitrary face can be designated as the infinite face. We
% denote the infinite face by $f_{\infty}$.  (We will always take the
% embedding to have $t$ on the infinite face.)  Embeddings can also be
% defined combinatorially~\cite{Edmonds60} by a {\em rotation system},
% or permutation, $\pi$, such that for every dart $d$, $\pi(d)$ is the
% dart $e$ such that $x = \tail(d) = \tail(e)$ and $e$ is the dart
% immediately after $d$ in the counterclockwise ordering of the darts
% around $x$.

% {\paragraph{Duality.} The {\em dual} graph of a planar
% graph $G$, is another planar graph $G^*$ whose vertices are the faces
% of $G$ and vice versa. Two vertices in $G^*$ are connected by an edge
% if and only if their corresponding faces share an edge in $G$. Thus
% there is a one-to-one correspondence between edges of $G$ and edges of
% $G^*$. 

Suppose $a$, $b$, and $d$ are darts such that
$\head(a)=\tail(b)=\head(d)=v$: we say $d$ {\em enters} $a\circ b$ at
$\head(a)$.  If the clockwise ordering of these darts around $v$ in
the embedding is $a,b,d$, then $d$ enters $a \circ b$ from the right
and $\rev{d}$ leaves $a \circ b$ from the right.  Entering and leaving
from the left are defined analogously.
% Likewise, if the 
% clockwise ordering of these darts around $v$ is $a,d,b$, then $d$
% enters $a \circ b$ from the left and $\rev{d}$ leaves $a \circ b$ from
% the left.
We say that $P$ {\em crosses} $Q$ {\em from right to left} at $X$ if
$X$ is a maximal subpath\footnote{For a trivial path $X$ that is a
  single vertex $x$, $\pstart{X}$, $\pend{X}$ and $\rev{X}$ are $x$
  itself.}  of $Q$ such that either $X$ or $\rev{X}$ is a subpath of
$P$ and $P$ enters $Q$ from the right at $\pstart{X}$ and $P$ leaves
$Q$ from the left at $\pend{X}$.  Crossing from left to right is
defined analogously.  By definition, $X$ cannot be a prefix or suffix
of either $P$ or $Q$. If $P$ and $Q$ are paths that do not cross, then
they are {\em non-crossing}.  A path/cycle is \emph{non-self-crossing}
if for every pair $P$ and $Q$ of its subpaths, $P$ does not cross
$Q$. % Note that, for any face $f$, the boundary of $f$ is a
% non-self-crossing cycle. 
The notions of entering and crossing are
illustrated in Appendix~\ref{app:fig}.

We define a set of darts $\leftd{P}$ that are on the left side of a
path $P$.  $\leftd{P}$ is a subset of the darts whose head or tail
(but not both\footnote{Assuming, w.l.o.g, that there are no parallel
  edges.}) is in $P$.  For a dart $d$ whose head or tail is in $P$ but
is not an endpoint of $P$, $d \in \leftd{P}$ if $d$ enters or leaves
$P$ from the left.  If $d$'s head or tail is an endpoint of $P$, we
break the left/right tie arbitrarily but consistently as follows. For
each vertex $x$, we assign an arbitrarily chosen face $f_x$ adjacent
to $x$. Suppose $P$ is an $s$-to-$t$ path.  If the head or tail of $d$
is $s$ (resp.\ $t$), then $d \in \leftd{P}$ if $d$ is in the \cw
(resp.\ \ccw) ordering of the darts and faces around $s$ (resp.\ $t$)
in the embedding between $f_s$ (resp. $f_t$) and the first (resp.\
last) dart of $P$.  $\rightd{P}$ is defined
analogously. 

We define the graph $G\snip P$ as the graph {\em cut open along $P$}.
$G\snip P$ contains two copies of $P$, $P_L$ and $P_R$, so that that
the edges in $\leftd{P}$ are adjacent to $P_R$ and the edges in
$\rightd{P}$ are adjacent to $P_L$.

Finally, we remind the reader of the parameter $\phi$ defined in the introduction and which we will use throughout:  $\phi$ is the {\em minimum number of faces that any curve drawn in the surface in which $G$ is embedded that any $s$-to-$t$ path must go through}.

\subsubsection{Clockwise and leftmost} 

A {\em potential assignment} $\pvec$ is an assignment of real numbers
to the faces of a planar graph.  Corresponding to every circulation in
a planar graph, there is a potential assignment such that the flow on
a dart $d$ is given by the difference between the face on the right
side and left side of $d$. We take the potential of the infinite face
to be 0. A circulation is {\em clockwise} (c.w.) if all the potentials
are non-negative. A cycle $C$ is {\em clockwise} if the circulation
that assigns +1 to every dart in $C$ and -1 to the reverse of these
darts (and 0 otherwise) is clockwise. The definition of clockwise
depends on the choice of $f_\infty$. Throughout this paper, we will
take $f_\infty = f_t$.

An $s$-to-$t$ path $A$ is {\em left of} an $s$-to-$t$ path $B$ if $A
\cup \rev{B}$ is a clockwise circulation. A path is {\em leftmost} if
there are no paths left of it.  A simple leftmost path from $s$-to-$t$
can be found by a depth-first left-most search~\cite{BK09}.  A flow
assignment is {\em leftmost} if every clockwise cycle is non-residual.
Leftmost paths and circulations are unique~\cite{KNK93}.  The leftmost
$st$-flow of a given value is unique for the same reason.  The next
lemma follows immediately from the definitions:
\begin{lemma}\label{lem:circ-to-cycles}
  A leftmost circulation can be decomposed into a set of
  flow-carrying clockwise simple cycles.
\end{lemma}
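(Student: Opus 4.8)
The plan is to combine the standard flow-decomposition argument with the potential-based characterization of clockwise circulations given just above the statement. First I would invoke the classical fact that any circulation (being a nonnegative flow on the darts with positive flow, with conservation at every vertex) decomposes into a finite collection of flow-carrying simple cycles $C_1,\dots,C_k$ together with multiplicities $\lambda_i>0$, such that the circulation equals $\sum_i \lambda_i \chi_{C_i}$ where $\chi_{C_i}$ assigns $+1$ to darts of $C_i$ and $-1$ to their reverses. This is obtained in the usual way: repeatedly pick a dart with positive flow, follow positive-flow darts (conservation guarantees we can always continue), extract a simple cycle once a vertex repeats, subtract the bottleneck amount, and recurse on the residual circulation, which has strictly smaller support.

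The remaining point is that each cycle $C_i$ appearing in this decomposition is clockwise. Here I would use the potential assignment $\pvec$ associated with the given leftmost circulation: the flow on a dart $d$ equals $\pvec$ of the face on its right minus $\pvec$ of the face on its left, with $\pvec(f_\infty)=0$, and since the circulation is leftmost (hence clockwise as a circulation), all potentials are nonnegative. The key observation is that the cycle-extraction procedure only ever walks along darts carrying \emph{positive} flow, i.e.\ darts $d$ with $\pvec(\rightd{d}) > \pvec(\leftd{d})$; in particular every dart traversed has the higher-potential side on its right. For such a cycle $C_i$, the interior faces all lie on one side; a short argument shows that the potential is monotone across $C_i$ so that every face enclosed by $C_i$ (on its right) has potential at least that of the faces just outside, and since $f_\infty$ (which has potential $0$) is never enclosed, all enclosed faces have nonnegative potential. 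Equivalently, the single-cycle circulation $\chi_{C_i}$ has an associated potential that is nonnegative, so $C_i$ is clockwise by definition.

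The step I expect to be the main obstacle is making the last paragraph fully rigorous — i.e.\ arguing that a simple cycle all of whose darts have their higher-potential side on the right is clockwise. One must be careful because the potential of $\chi_{C_i}$ is \emph{not} literally the restriction of $\pvec$; rather, the potential of $\chi_{C_i}$ assigns $1$ to every face strictly inside $C_i$ (with $f_t=f_\infty$ outside, as fixed in the excerpt) and $0$ outside, and one has to check this is consistent, i.e.\ that $f_\infty$ is indeed outside $C_i$. This follows because $\pvec(f_\infty)=0$ is the global minimum of a nonnegative function and the orientation of $C_i$ forces the strictly-inside faces to have strictly larger $\pvec$ than the outside faces along $C_i$; hence $f_\infty$ cannot be enclosed. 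With that settled, $\chi_{C_i}$ is a clockwise circulation and the decomposition is exactly as claimed. The whole argument is short, which is consistent with the paper's remark that the lemma ``follows immediately from the definitions.''
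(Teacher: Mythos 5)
Your first step --- the standard decomposition of a circulation into flow-carrying simple cycles --- is fine, and you correctly observe that the leftmost circulation has a nonnegative potential. But the step you yourself flag as the obstacle is where the proposal actually breaks: nonnegativity of $\pvec$ does \emph{not} imply that every flow cycle in the decomposition is clockwise, and the reason you offer --- that $f_\infty$, having minimum potential, cannot be ``enclosed'' --- does not hold. Your monotonicity observation only constrains faces \emph{adjacent} to $C_i$; $f_\infty$ can lie on the right of $C_i$ without being adjacent to it.

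Concretely, take two concentric triangles joined by spokes, with faces $g_1$ (inside the inner triangle), $g_2,g_3,g_4$ (the annulus quadrilaterals) and $f_\infty$. The potential $\pvec(g_1)=1$, $\pvec(g_2)=\pvec(g_3)=\pvec(g_4)=2$, $\pvec(f_\infty)=0$ is nonnegative, so the associated circulation is clockwise. Its flow decomposition is forced (the two triangles share no darts) and contains the inner triangle oriented with $g_1$ on its left --- a \emph{counterclockwise} cycle, since the potential of $\chi_{C_{\mathrm{in}}}$ assigns $-1$ to $g_1$. Here $f_\infty$ is on the right of this inner cycle but not adjacent to it, so the ``$f_\infty$ is never enclosed'' step fails. (This circulation is of course not leftmost --- the clockwise inner triangle is residual --- but it shows that one cannot get by using only $\pvec\geq 0$.)

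The argument that really is immediate from the definitions uses leftmostness directly rather than via the potential. Suppose some extracted flow cycle $C$ were not clockwise; then $\rev{C}$ is clockwise, and every dart of $\rev{C}$ is residual, since for a flow dart $d$ of $C$ we have $\cvec_\fvec[\rev{d}] = \cvec[\rev{d}] - \fvec[\rev{d}] = \cvec[\rev{d}] + \fvec[d] > 0$. A residual clockwise cycle contradicts the definition of leftmost, so every cycle in the decomposition is clockwise. (Your potential approach can be repaired by invoking the full shortest-path structure of $\pvec$ rather than just its nonnegativity --- a shortest dual path from $f_\infty$ into the bounded side of $C_i$ would have to cross $C_i$ while \emph{decreasing} the potential, impossible with nonnegative lengths --- but this is strictly more work than the direct argument.)
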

% \begin{proof}
%   It is immediate that a circulation can be composed into a set of
%   flow-carrying cycles. It remains to show that each cycle in the set
%   is clockwise. For a contradiction, assume that there exists a
%   counterclockwise flow-carrying cycle $C$. Then $\rev{C}$ is a
%   clockwise residual cycle, contradicting the circulation being
%   leftmost. ****
% \end{proof}
Khuller, Naor and Klein illustrated that the leftmost circulation
corresponds to the potential assignment given by shortest-path
distances in the dual (interpreting capacities as
lengths)~\cite{KNK93}.  It follows immediately that the maximum flow
given by Hassin's algorithm, from dual shortest-path distances in the
dual, is leftmost.

We say that (residual) capacities $\cvec$ are \cw acyclic if there are
no \cw cycles that are residual.  In our algorithms, we start with the
residual capacities obtained via dual shortest paths according to
Khuller, Naor and Klein; these are \cw acyclic.  Leftmost flow
assignments \wrt \cw acyclic capacities are acyclic~\cite{BK09}.

\begin{theorem}\label{thm:no-flow-from-left}
  Let $L$ be the leftmost residual $s$-to-$t$-path in $G$ \wrt \cw
  acyclic capacities $\cvec$.  Let $\fvec$ be any
  $st$-flow\footnote{Not necessarily leftmost.} (of any value).  Then no
  simple \xty{s}{t} flow path crosses $L$ from the left to right.
\end{theorem}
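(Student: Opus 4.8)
The plan is to proceed by contradiction: assuming that some simple \xty{s}{t} flow path $P$ of $\fvec$ crosses $L$ from left to right, I will construct a simple residual \xty{s}{t} path that lies strictly to the left of $L$, contradicting that $L$ is leftmost.

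First I would strip away the flow. Every flow dart $d$ has $0<\fvec[d]\le\cvec[d]$ and is therefore residual \wrt $\cvec$, so $P$ is a simple residual \xty{s}{t} path and only this is used. (In fact pushing an infinitesimal amount of flow along any simple residual \xty{s}{t} path makes it a flow path, so it would even suffice to show that no simple residual \xty{s}{t} path crosses $L$ from left to right.) I would also take $L$ simple, since a simple leftmost path exists. Next I would record a ``walk-to-path'' lemma: if $W$ is a residual \xty{s}{t} walk such that $W\cup\rev L$ is a clockwise circulation, then there is a simple residual \xty{s}{t} path $\widetilde W$ with $\widetilde W\cup\rev L$ clockwise, and with $\widetilde W\ne L$ whenever $W\cup\rev L$ is not the zero circulation. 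This follows by repeatedly excising closed subwalks of $W$: each such subwalk is residual, hence decomposes into simple cycles that are each residual and therefore --- since $\cvec$ is \cw acyclic --- counterclockwise; excising it merely adds a clockwise circulation to $W\cup\rev L$, preserving both clockwise-ness and non-triviality.

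The core construction is cleanest when $X=L[a,b]$, a maximal common subpath witnessing the crossing (so $P$ enters $L$ from the left at $a$ and leaves it from the right at $b$), is the \emph{first} crossing of $P$ with $L$ along $P$. The local picture at $a$ forces the final dart of $P[s,a]$ to differ from that of $L[s,a]$, so $P[s,a]\ne L[s,a]$. Moreover $P[s,a]$ and $L[s,a]$ are non-crossing simple \xty{s}{a} paths --- any crossing between them, occurring strictly before $X$ along $P$, would be a crossing of $P$ with $L$ contradicting the choice of $X$ --- so one lies weakly to a fixed side of the other, and ``$P$ enters $L$ from the left at $a$'' identifies that side as the left. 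Hence $P[s,a]\cup\rev{L[s,a]}$ is a nonzero clockwise circulation. Now $M:=P[s,a]\circ L[a,t]$ is a residual \xty{s}{t} walk, and cancelling the common suffix $L[a,t]$ gives $M\cup\rev L=P[s,a]\cup\rev{L[s,a]}$, which is clockwise and nonzero; the walk-to-path lemma then produces a simple residual \xty{s}{t} path that is left of $L$ and distinct from $L$, contradicting that $L$ is leftmost.

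The main obstacle is the remaining case, where $P$ leaves $s$ on the right of $L$ and crosses left-to-right only after one or more right-to-left crossings. Here $P$ may weave back and forth, and a careless splice of a piece of $L$ onto a piece of $P$ yields a walk that simplifies back to $L$ itself, so the walk-to-path lemma returns $\widetilde W=L$ and gives no contradiction. I would attack this by focusing on the sub-path $A$ of $P$ traversed between the most recent crossing of $P$ with $L$ preceding $X$ (which is necessarily right-to-left) and the crossing $X$ itself: $A$ runs between two vertices of $L$, does not cross $L$, and approaches $X$ from the left, hence stays weakly on the left of $L$; splicing it in place of the corresponding arc of $L$ again yields a walk $M$ with $M\cup\rev L$ clockwise --- \emph{provided} the two endpoints of $A$ occur in the correct order along $L$ (otherwise the spliced walk merely retraces a stretch of $L$). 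Guaranteeing this, presumably via an extremal choice of the crossing $X$, in a way that is robust against all the ways $P$ and $L$ can interleave along $L$, is the delicate heart of the argument and the step I expect to demand the most care.
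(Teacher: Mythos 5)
There is a genuine gap, and you flag it yourself: the ``remaining case,'' in which $P$ leaves $s$ on the right of $L$ and makes one or more right-to-left crossings before the offending left-to-right crossing $X$, is left unresolved, and the extremal choice of crossing you gesture at is never carried out. Your own diagnosis of the difficulty is correct --- nothing forces the two endpoints of the subpath $A$ (between the last right-to-left crossing and $X$) to occur in $L$-order, and when they do not, the spliced walk simplifies back to $L$ --- so the argument, as written, does not close.

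The paper's proof sidesteps this case analysis entirely by aiming for a different contradiction: instead of constructing a residual $s$-to-$t$ path strictly left of $L$, it produces a clockwise \emph{residual cycle}, violating the assumed \cw acyclicity of $\cvec$. Since the flow path (call it $P$) and $L$ both start at $s$, let $v$ be the last vertex of $L[s,\pstart X)$ visited by $P$. Because $L$ is leftmost residual, $P[v,\pstart X]$ cannot be left of $L[v,\pstart X]$, so $L[v,\pstart X]\circ\rev{P[v,\pstart X]}$ is a simple clockwise cycle. Because $P$ crosses $L$ from left to right at $X$, $P$ enters the strict interior of this cycle just after $X$; being simple and needing to reach $t$ on the infinite face, $P$ must escape through $L[v,\pstart X]$ at some vertex $u$, and then $P[\pstart X,u]\circ L[u,\pstart X]$ is a clockwise cycle of residual darts --- the contradiction. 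This enclosure step (enter a bounded region and you must cross its boundary to leave) is what lets the paper dispose of all of $P$'s weaving in one stroke, and it is exactly the ingredient missing from your sketch. Your first-crossing case and walk-to-path lemma are reasonable and might be salvaged into an alternative proof, but you would still need a mechanism playing the role of the enclosure argument to control the general interleaving of $P$ with $L$.
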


\begin{proof}
  Let $F$ be an $s$-to-$t$ flow path. Suppose for a contradiction that
  $F$ crosses $L$ from left to right at $X$ (a subpath of $L$).  Since
  $F$ and $L$ both start at $s$, $L[s, \pstart X)$ must contain a
  vertex of $F$.  Let $v$ be the last such vertex.  Since $F$ is a
  flow path, $F$ must be residual \wrt $\cvec$ (in order for flow to
  be routed on it).  Since $L$ is the leftmost residual path \wrt
  $\cvec$, no subpath of $F$ can be left of any subpath of $L$
  (between the same endpoints).  It follows that $L[v,\pstart X] \circ
  \rev{F[v,\pstart X]}$ is a simple clockwise cycle.  Further, since
  $F$ crosses $L$ from left-to-right at $X$, $F$ enters the strict
  interior of this cycle. (See figure in Appendix~\ref{app:fig}.) To
  escape (to get to $t$), $F$ must hit a vertex $u$ of $L[v,\pstart
  X]$ (since $F$ is simple).  Then $F[\pstart X,u] \circ L[u,\pstart
  X]$ is a clockwise cycle, contradicting that the initial capacities
  $\cvec$ are \cw acyclic.
\end{proof}

\section{Infinite covers} % checked, pretty happy
\label{sec:universal-cover}

In our analysis, we use an infinite covering graph of $G$ derived from
the universal cover of a cylinder on which we embed
$G$~\cite{Spanier94}.  In our algorithm, we use a finite subgraph of
this infinite covering graph.
% %FIX DEFINITIONS TO CONVENTION/CALL DIFFERENTLY
% We adapt the topological notion of a {\em universal covering graph} % (or, simply, the {\em
%   % universal cover})
% of $G$ to create a specific $st$-planar
% cylindrical map of an embedding of $G$. We refer the interested reader to
% ref.~\cite{Spanier94} and~\cite{SOMETHING} for the conventional
% background, but our use of it may be viewed independently via the
% definitions presented below.
% % We define the {\em universal covering graph} (or, simply, the {\em
% % universal cover}) of $G$. This is an instance of the more general
% concept of a universal covering map~\cite{Spanier94}, but our use of
% it can be viewed independently via the definitions presented below.
The construction of our cover is similar to the one
described in Section 2.4 of Erickson's analysis~\cite{Erickson10} of
the leftmost-path algorithm of Borradaile and Klein.  Erickson's cover
was of the dual graph, whereas we remain in the primal as our analysis
remains entirely in the primal.

% Given an embedding of $G = (V, E)$ on the sphere, consider the
% cylinder obtained by deleting an arbitrary face adjacent to $s$ and
% $f_\infty$ (which is adjacent to $t$). 
Embed $G$ on a sphere and remove the interiors of $f_t$ and $f_s$.
The resulting surface is a cylinder with $t$ and $s$ embedded on
opposite ends.  The universal cover of this cylinder is an infinite
strip. The repeated drawing of $G$ on the universal cover of the
cylinder defines a covering graph $\cal G$ of $G$.  For a subgraph $X$
of $G$, we denote the subgraph of $\cal G$ whose vertices and darts
map to $X$ by ${\cal G}[X]$.  We say $\bar X \subset {\cal G}[X]$ is a
{\em copy} of $X$ if $\bar X$ maps bijectively to $X$.  We say that
$\bar X$ is an {\em isomorphic} copy of $X$ if $\bar X$ is isomorphic
to $X$.  Note that an isomorphic copy need not exist. (For example,
there is no isomorphic copy of $G$ in $\cal G$, if $G$ is not $st$-planar.)

For ease of notation, we take the top (resp. bottom) of the cylinder
to correspond to $f_t$ (resp.\ $f_s$); all the vertices in ${\cal
  G}[t]$ (resp.\ ${\cal G}[s]$) are on the top (resp.\ bottom) of the
infinite strip.  We can therefore refer to the left and right {\em
  ends} of the strip (which extend to infinity). We number the copies
of $t$ from left to right: ${\cal G}[t] = \{\ldots t^{-1}, t^0, t^{1},
\ldots\}$, picking $t^0$ arbitrarily.  For a simple $s$-to-$t$ path
$P$ in $G$, we denote by $P^i$ the isomorphic copy of $P$ in ${\cal
  G}[P]$ that ends at $t^i$.  $P^i$ divides the infinite strip into
two sides which are the components of ${\cal G} \snip P^i$, the
portion containing the left end and the portion containing the right
end.  Therefore, we can order the copies from left to right, i.e.,
${\cal G}[P] = \{\ldots, P^{-1}, P^0, P^1, \ldots\}$ where $P^i$ is in
the left component of ${\cal G} \snip P^j$ for all $i < j$.

We further define copies of $G$ \wrt $P$: $G^i_P\cup P^{i+1}$ is the finite component of 
subgraph of $\cal G \snip P^i \snip P^{i+1}$.
\begin{observation} \label{obs:iso}
  $G^i_P \cup P^{i+1}$ is isomorphic to $G \snip P$ with $P^i$ mapped to $P_L$ and $P^{i+1}$ mapped to $P_R$.
\end{observation}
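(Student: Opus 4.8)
The plan is to realize the claimed isomorphism as a restriction of the covering projection $\pi\colon{\cal G}\to G$ to a single fundamental domain of the deck group. Recall from the construction that cutting the cylinder (the sphere with the interiors of $f_s$ and $f_t$ removed) open along the arc $P$ yields a disk in which $G\snip P$ is embedded, with $P_L$ and $P_R$ appearing on its boundary; because $P$ runs from one boundary circle of the cylinder to the other, this disk is a fundamental domain for the action of the deck group $\cong{\mathbb Z}$ on ${\cal G}$, whose generator is the shift sending every $X^j$ to $X^{j+1}$.

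First I would pin down which region of the strip plays the role of that fundamental domain. The copies $\{P^j\}_j$ are pairwise vertex-disjoint: $P$ is simple, so uniqueness of path lifting forces two copies that share a vertex to coincide. By the left-to-right ordering already established, $P^i$ and $P^{i+1}$ cobound a region $D_i$ of the strip that contains no other copy $P^j$; moreover $D_i$ is bounded (it lies between $P^i$ and $P^{i+1}$ horizontally and inside the strip vertically), so ${\cal G}$ has only finitely many vertices and darts in $\overline{D_i}$. Since ${\cal G}$ is locally finite, the copies $t^j$ of $t$ are discrete and infinite in number, hence unbounded in both directions, so the $P^j$ exhaust the strip horizontally and the slabs $\{D_j\}_j$ cover it; as the shift carries $D_j$ to $D_{j+1}$, each $D_j$ is a fundamental domain and $\overline{D_i}\cap{\cal G}[P]=P^i\cup P^{i+1}$.

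Next I would compare $\pi$ restricted to ${\cal G}\cap\overline{D_i}$ with the natural re-gluing map $q\colon G\snip P\to G$ that re-identifies $P_L$ with $P_R$. Both are ``quotients by identifying two boundary copies of $P$'': $\pi$ is injective on the interior of $D_i$ and maps it bijectively onto $G\setminus P$ (fundamental domain), and it carries each of $P^i,P^{i+1}$ isomorphically onto $P$ (they are isomorphic copies); $q$ behaves identically with $P_L,P_R$ in the roles of $P^i,P^{i+1}$. Hence $\psi:=q^{-1}\circ\pi$ on the interior, extended over $P^i$ and $P^{i+1}$ via the isomorphisms $P^i\cong P\cong P_L$ and $P^{i+1}\cong P\cong P_R$, is a graph isomorphism ${\cal G}\cap\overline{D_i}\to G\snip P$; it respects the embedding because a covering map is a local homeomorphism and thus carries $\leftd{P^i}$ and $\rightd{P^i}$ onto $\leftd{P}$ and $\rightd{P}$ (the endpoint tie-breaks agreeing, since the chosen faces $f_s,f_t$ pull back consistently along $\pi$). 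Finally, cutting ${\cal G}$ open along $P^i$ and along $P^{i+1}$ only duplicates darts incident to those two paths, and the finite component receives exactly one copy of each, so $G^i_P\cup P^{i+1}$ is isomorphic to ${\cal G}\cap\overline{D_i}$ with its two boundary paths equal to $P^i$ and $P^{i+1}$; then $\psi$ is the required isomorphism.

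The step I expect to be the main obstacle is the left/right bookkeeping that forces $P^i$ to map to $P_L$ rather than to $P_R$. This reduces to the observation that $\rightd{P^i}$ faces the slab $D_i$ — because $P^i$ is directed from the $s$-end of the strip toward the $t$-end, so its right side is the side of $P^{i+1}$ — combined with the convention fixed when $G\snip P$ was defined, namely that $P_L$ is the copy of $P$ adjacent to $\rightd{P}$; the symmetric statement pins $P^{i+1}$ to $P_R$. The darts at the endpoints $s$ and $t$ need the same care, but the arbitrary-but-consistent choices of $f_s$ and $f_t$, pulled back along $\pi$, make those tie-breaks match as well.
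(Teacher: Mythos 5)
The paper states this as an unproved Observation, treating it as immediate from the way ${\cal G}$ is built: the strip is tiled by copies of the cut-open cylinder, and the slab between $P^i$ and $P^{i+1}$ is one such tile. Your proof is a correct, more explicit version of this. The covering-space framing (the slab $D_i$ is a fundamental domain for the $\mathbb{Z}$ deck action, the covering projection restricted to $\overline{D_i}$ factors through the re-gluing map $q$, lifts of a simple path are pairwise disjoint by uniqueness of lifts) is exactly the implicit content of the construction, and you supply the details the authors omit.

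You are also right to single out the left/right bookkeeping as the only step that could actually go wrong, and you resolve it correctly. The finite component sits to the right of $P^i$ in the strip, so the darts of that component incident to $P^i$ are precisely $\rightd{P^i}$; since the paper's convention attaches $\rightd{P}$ to $P_L$ in $G\snip P$, the boundary copy of $P^i$ must correspond to $P_L$. Symmetrically, the finite component is to the left of $P^{i+1}$, so its boundary copy of $P^{i+1}$ carries $\leftd{P^{i+1}}$ and hence corresponds to $P_R$. This matches the statement exactly, including the somewhat counterintuitive pairing in the paper's definition of $P_L$ and $P_R$. No gaps.
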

We also number the copies of $s$ from left to right: ${\cal G}[s] =
\{\ldots s^{-1}, s^0, s^{1}, \ldots\}$, picking $s^0$ arbitrarily.
(In the next section, for convenience of notation, we will make this
choice linked to the choice of $t^0$.)

The following lemma connects the relative topology of two paths in $G$
to that in $\cal G$.

\begin{lemma}\label{lem:mapping} 
  Let $P$ be a simple $s$-to-$t$ path and let $Q$ be a simple path
  sharing only its endpoints with $P$.  Consider an isomorphic copy
  $\bar Q$ of $Q$ in ${\cal G}$.  Let $a$ and $b$ be the first and
  last darts of $Q$, respectively.
  \begin{enumerate}
  \item If $a, b \in \leftd{P}$, $\bar Q$ starts and ends on $P^{j}$ for
    some $j$.
  \item If $a, b \in \rightd{P}$, $\bar Q$ starts and ends on $P^{j}$ for
    some $j$.
  \item If $a\in \leftd{P}$, $b\in \rightd{P}$, $\bar Q$ starts on
    $P^{j+1}$ and ends on $P^j$ for some $j$.
  \item If $a\in \rightd{P}$, $b\in \leftd{P}$, $\bar Q$ starts on
    $P^j$ and ends on $P^{j+1}$ for some $j$.
  \end{enumerate}
\end{lemma}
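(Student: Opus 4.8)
The plan is to work entirely in the infinite strip $\cal G$ and track how the copy $\bar Q$ sits relative to the copies $\{P^i\}$, using the fact that the strip is cut into ordered pieces $G^i_P$ by consecutive copies of $P$. First I would observe that, since $Q$ shares only its endpoints $s$ and $t$ with $P$, the interior of $\bar Q$ is disjoint from every copy $P^i$; hence the interior of $\bar Q$ lies entirely within a single region $G^i_P$ (the interiors of these regions partition the complement of $\bigcup_i P^i$ in the strip, and $\bar Q$'s interior is connected). So $\bar Q$ runs between a vertex of $\cal G[s]$ and a vertex of $\cal G[t]$ on the boundary of one such region $G^i_P$; by Observation~\ref{obs:iso} that boundary consists of $P^i$ (the left copy) and $P^{i+1}$ (the right copy), glued at $s^\bullet$ and $t^\bullet$. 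Therefore each of the two endpoints of $\bar Q$ lies on $P^i$ or on $P^{i+1}$, and it remains to decide which, using the side from which the first dart $a$ and last dart $b$ leave/enter $P$.

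The key step is the local side analysis at $s$ and at $t$. In $G \snip P$, the darts of $\leftd P$ are attached to $P_R$ and the darts of $\rightd P$ are attached to $P_L$, by the definition of $G\snip P$; transporting this through the isomorphism of Observation~\ref{obs:iso} (where $P^i \leftrightarrow P_L$ and $P^{i+1}\leftrightarrow P_R$), a dart of $\leftd P$ incident to the region $G^i_P$ attaches to $P^{i+1}$ and a dart of $\rightd P$ attaches to $P^i$. Now $a$ is the first dart of $\bar Q$, so it leaves $P$ at the copy of $s$ that is an endpoint of $\bar Q$; if $a\in\leftd P$ then that endpoint of $\bar Q$ lies on $P^{i+1}$, and if $a\in\rightd P$ it lies on $P^i$. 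Symmetrically, $b$ is the last dart, entering $P$ at the copy of $t$; $b\in\leftd P$ forces that endpoint onto $P^{i+1}$ and $b\in\rightd P$ onto $P^i$. (Here I'd note that the tie-breaking convention for darts incident to the endpoints $s,t$ of $P$ — via the chosen faces $f_s, f_t$ — is exactly what makes ``$a\in\leftd P$'' well-defined even though $s$ is an endpoint of $P$; the convention is consistent, so the same face choice that defines $\leftd P$ in $G$ transports to $\cal G$ via the covering projection.)

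Combining the two: in case (1), $a,b\in\leftd P$ puts both endpoints of $\bar Q$ on $P^{i+1}$, so $\bar Q$ starts and ends on $P^{j}$ with $j=i+1$. In case (2), $a,b\in\rightd P$ puts both endpoints on $P^i$, so $j=i$. In case (3), $a\in\leftd P$ gives the start on $P^{i+1}$ and $b\in\rightd P$ gives the end on $P^i$, which is the claimed ``starts on $P^{j+1}$, ends on $P^j$'' with $j=i$. In case (4), symmetrically the start is on $P^i$ and the end on $P^{i+1}$, i.e.\ ``starts on $P^j$, ends on $P^{j+1}$'' with $j=i$.

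The main obstacle I anticipate is the bookkeeping at the two endpoints, since $s$ and $t$ are the one place where ``left of $P$'' is not intrinsic but set by the arbitrary-but-consistent choice of incident faces $f_s,f_t$; I need to make sure that the partition of the strip's complement into the regions $G^i_P$ is compatible with that choice — concretely, that the copy $P^i$ and $P^{i+1}$ bounding $G^i_P$ really do ``absorb'' the right-darts and left-darts respectively at the copies of $s$ and $t$, consistently with how $P_L, P_R$ absorb them in $G\snip P$. Once Observation~\ref{obs:iso} is invoked with the endpoint faces tracked correctly, the rest is the elementary region-membership argument above. A secondary (but routine) point is that $\bar Q$ being an isomorphic copy of a \emph{simple} path guarantees its interior is connected and touches no $P^i$, which is what confines it to a single region.
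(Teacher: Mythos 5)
Your proof is correct and takes essentially the same route as the paper's: pass to $G \snip P$ (equivalently, confine $\bar Q$ to a single region $G^i_P$), classify $Q$'s endpoints as lying on $P_L$ or $P_R$ by which side the first and last darts attach, and transport to the cover via Observation~\ref{obs:iso}. You just spell out the region-confinement step more explicitly than the paper, which leaves it implicit in the appeal to the observation.
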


\begin{proof} %proof checked
  Consider $Q$ in $G \snip P$.  Since $Q$ shares only its endpoints
  with $P$, $Q$ is path in $G \snip P$.  For the 4 cases of the lemma,
  by definition of $\snip$, $Q$ is a (1) $P_R$-to-$P_R$, (2)
  $P_L$-to-$P_L$, (3) $P_R$-to-$P_L$, and (4) $P_L$-to-$P_R$ path.  The lemma then follows from Observation~\ref{obs:iso}. \qed
  % Let $a^i$ and $b^i$ be the first and last darts of $Q^i$.  Let
  % $\tail(a^i)= u_i$ and $\head(b^i)= v_j$.  In the first case of the
  % lemma, since $a, b \in \leftd{P}$ and $u\in P$, $v\in P$,
  % $\lambda(a, P) = \lambda(\rev{b}, P) = 1$.  The second case is
  % similar.  In the third case, since $a \in \leftd{P}$, $b \in
  % \rightd{P}$ and $u\in P$, $v\in P$, $\lambda(a, P) = 1$,
  % $\lambda(\rev{b}, P) = 0$.  The last case is symmetric. \qed
\end{proof}

The following two lemmas relate clockwise cycles in $\cal G$ and in
the original graph $G$. The following holds for any simple $s$-to-$t$ path $P$.

\begin{lemma}
\label{lem:maps-of-clockwise-cycles}
Let $u^i$ and $u^j$ be copies in $\cal G$ of the vertex $u$ in $G$
such that $u^i \in G^i_P$ and $u^j \in G^j_P$. Let $Q$ be any
\xty{u^i}{u^j} path in $\cal G$.  If $i < j$, then the mapping of $Q$
into $G$ contains a clockwise cycle.
\end{lemma}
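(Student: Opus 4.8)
The plan is to use the universal-cover structure to track how a path from $u^i$ to $u^j$ must ``wind'' across the copies of $P$, and then push this down to $G$. First I would observe that the copies $P^0, P^1, P^2, \ldots$ of the simple $s$-to-$t$ path $P$ are pairwise disjoint (sharing only $s$-copies and $t$-copies appropriately), and that they separate the infinite strip into the regions $G^k_P$, ordered left to right. By hypothesis $u^i$ lies strictly to the left of $P^{i+1}$ and $u^j$ lies strictly to the right of $P^{j}$, so since $i<j$, the path $Q$ from $u^i$ to $u^j$ must cross at least one of the separating paths $P^{i+1}, \ldots, P^{j}$. Pick such a crossing: there is a copy $P^k$ ($i < k \le j$) and a subpath $R$ of $Q$ that goes from a vertex $w^{k-1}$ in (the closure of) $G^{k-1}_P$ on $P^k$, wanders off, and returns to a vertex of $P^k$ — or more carefully, $R$ leaves $P^k$ into $G^{k-1}_P$ and later re-enters $P^k$; by the separation property the endpoints of this ``excursion'' are two copies on $P^k$ of vertices of $P$.

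Next I would use the fact that $P^k$ is an isomorphic copy of $P$ to name the endpoints of this excursion: say $R$ runs from $x^{(1)}$ to $x^{(2)}$, both on $P^k$, where $x^{(1)}$ is a copy of vertex $x_1 \in P$ and $x^{(2)}$ is a copy of $x_2 \in P$. The key point is that, because we went from a lower-indexed region toward a higher-indexed region, the excursion re-enters $P^k$ ``above'' where it left in the left-to-right order induced along $P^k$ — i.e.\ $x^{(2)}$ is closer to $t^k$ along $P^k$ than $x^{(1)}$ is, so that $R$ together with the subpath $P^k[x^{(1)}, x^{(2)}]$ forms a closed curve in the strip that bounds a region lying (locally) to one side. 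Mapping everything down to $G$: $R$ maps to a $x_1$-to-$x_2$ walk $\rho$ in $G$, and $P^k[x^{(1)},x^{(2)}]$ maps to the subpath $P[x_1,x_2]$ of $P$; the concatenation $\rho \circ \rev{P[x_1, x_2]}$ is a closed walk in $G$. Using that $f_\infty = f_t$ (so the top of the cylinder is the outer face) and the orientation convention fixed in Section~\ref{sec:universal-cover}, the winding direction of the excursion around the strip translates into this closed walk being a clockwise closed walk in $G$; decomposing it into simple cycles, at least one summand is a clockwise cycle (since the potential/winding contribution is nonzero and of a fixed sign). Hence the mapping of $Q$ into $G$ contains a clockwise cycle.

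The main obstacle I anticipate is making the ``winding'' argument rigorous: extracting from the topological fact ``$Q$ goes from left of $P^{i+1}$ to right of $P^j$'' a concrete excursion whose endpoints on some $P^k$ are correctly ordered so that the induced closed walk in $G$ is provably \emph{clockwise} (and not counterclockwise). This is exactly where Observation~\ref{obs:iso} and the left/right labelling of $\leftd{P}, \rightd{P}$ must be invoked carefully — the crossing of $P^k$ must be shown to occur in the right-to-left sense relative to $P^k$, which is what pins down the sign of the potential and hence clockwise-ness. I would handle this by induction on $j-i$: reduce to the case $j = i+1$ by peeling off the first copy $P^k$ that $Q$ reaches with index $> i$, apply Lemma~\ref{lem:mapping}-style reasoning to the first excursion of $Q$ across $P^k$ to get a clockwise cycle in $G$ directly, and note the remainder of the argument is not even needed once one clockwise cycle has been produced. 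The companion Lemma (stated next in the paper) presumably handles the $i > j$ or mixed situation, so I only need the one-sided statement here.
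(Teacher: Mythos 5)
There is a genuine gap, and it sits precisely where the lemma is hardest to use later in the paper.

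Your construction produces the closed walk $\rho \circ \rev{P[x_1,x_2]}$, where $\rho$ is the image of a subpath $R$ of $Q$ and $P[x_1,x_2]$ is a subpath of the cutting path $P$. Even granting your claim that this closed walk is clockwise and that its cycle decomposition contains a clockwise simple cycle, that cycle may well use darts of $\rev{P}$. The lemma, however, asserts that the mapping of $Q$ \emph{itself} contains a clockwise cycle, and this is essential downstream: the lemma is invoked (e.g.\ in Lemma~\ref{lem:convert-to-pseudoflow} and Lemma~\ref{lem:value}) on a \emph{residual} path $Q$ to exhibit a \emph{residual} clockwise cycle, contradicting leftmostness. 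If the cycle you exhibit uses darts of $P$ rather than of $Q$, residuality of $Q$ gives you nothing about that cycle and the contradiction evaporates. So patching in $\rev{P[x_1,x_2]}$ to close the walk is not an admissible move; you must build the clockwise cycle out of images of $Q$'s darts only.

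There is a second, earlier problem: the ``excursion'' you pick need not exist. You assert there is a subpath $R$ of $Q$ that leaves $P^k$ into $G^{k-1}_P$ and later re-enters $P^k$, but a path from $u^i$ to $u^j$ with $i<j$ can perfectly well be monotone, touching each of $P^{i+1},\ldots,P^j$ in exactly one vertex and never doubling back; then there is no such $R$ and your argument never gets started. Your further claim that $x^{(2)}$ lies closer to $t^k$ along $P^k$ than $x^{(1)}$ is likewise unjustified for an arbitrary $Q$, which may wander in either direction along any $P^k$ it meets.

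The paper's proof avoids both problems by a different device. It concatenates the $\mathbb{Z}$-translates of $Q$ to form an infinite $-\infty$-to-$\infty$ walk in $\cal G$, extracts a \emph{simple} $-\infty$-to-$\infty$ path $\cal R$ from it, and defines the potential $\pvec_{\cal G}$ that is $1$ below $\cal R$ and $0$ above; pushing to $G$ via $\pvec_G[f]=\max_{\bar f\in{\cal G}[f]}\pvec_{\cal G}[\bar f]$ gives a nonnegative potential (hence a clockwise circulation) whose flow darts are, \emph{by construction}, only darts covered by $\cal R$, i.e.\ by copies of $Q$. Lemma~\ref{lem:circ-to-cycles} then produces the clockwise flow cycle inside the image of $Q$, exactly as required. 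The moral is that the winding information of $Q$ should be encoded in a face potential supported only on $Q$'s footprint, rather than by splicing $Q$ together with $P$ into a closed curve.
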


\begin{proof} %proof checked
  Suppose $i < j$ and consider the composition of copies of $Q$ to
  create an infinite path $\cal Q$ in $\cal G$.  $\cal Q$ is a
  $-\infty$-to-$\infty$ path and may not be simple.  If we imagine a
  source at $-\infty$ and a sink at $\infty$ and we send one unit of
  flow along $\cal Q$, we have a flow of value 1.  It follows that
  there is a simple $-\infty$-to-$\infty$ path $\cal R$ that contains
  a subset of the darts of $\cal Q$.

%***CLARIFY BELOW - UNDEFINED FOR ARBITRARY FACE DELETED FOR THE MAP?***
  Consider the potential function $\pvec_{\cal G}$ over the faces of
  $\cal G$ that assigns $1$ to every face below $\cal R$ and $0$ to
  all other faces.  Define the potential function $\pvec_G$ over the
  faces of $G$ where $\pvec_G[f] = \max_{\bar f \in {\cal G}[f]}
  {\pvec}_{\cal G}[\bar f]$ for face $f$ of $G$.  By definition of \cw
  circulation, ${\pvec}_G$ defines a clockwise circulation (for
  $f_\infty = f_t$).  By definition of $\pvec_G$, only darts mapped to
  by $\cal R$ are flow darts.  By Lemma~\ref{lem:circ-to-cycles},
  there is a \cw flow cycle in the circulation. \qed
\end{proof}

The proof of the following lemma is very similar to that for
Lemma~\ref{lem:mapping} and omitted in the efforts of brevity:
\begin{lemma}\label{lem:cwmap}
  Let $C$ be a simple clockwise cycle in $\cal G$.  A subset of the
  darts of $C$ maps to a clockwise cycle in $G$.
\end{lemma}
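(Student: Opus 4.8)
The plan is to mimic, for a cycle, the potential-pushing argument that proves Lemma~\ref{lem:maps-of-clockwise-cycles}. First I would pin down the clockwise circulation that $C$ carries in $\cal G$. Since $\cal G$ is drawn on the universal cover of a cylinder --- an infinite strip, hence simply connected --- the simple cycle $C$ bounds a disk $D$, and because $C$ is a finite cycle it is confined to a bounded region, so $D$ is bounded and contains only finitely many faces; call these the \emph{interior} faces of $C$. Since $C$ is clockwise and $f_t$ corresponds to face(s) of $\cal G$ lying outside $D$ (it descends from the unbounded part of the cover), the potential $\pvec_{\cal G}$ associated with the circulation that assigns $+1$ to every dart of $C$ (and $-1$ to its reverse) equals $1$ on the interior faces of $C$ and $0$ on all other faces; in particular it has finite support.

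Next, exactly as in the proof of Lemma~\ref{lem:maps-of-clockwise-cycles}, I would push this potential down to $G$ by setting $\pvec_G[f] = \max_{\bar f \in {\cal G}[f]} \pvec_{\cal G}[\bar f]$. This maximum is well defined (it is taken over finitely many $1$'s, otherwise $0$'s), $\pvec_G$ takes values in $\set{0,1}$, and $\pvec_G[f_t] = 0$, so $\pvec_G$ defines a clockwise circulation of $G$ (for $f_\infty = f_t$). This circulation is not identically zero: $C$ is non-trivial, so it encloses at least one face, whose image in $G$ then receives potential $1$ while $f_t$ receives $0$, forcing nonzero flow on some dart.

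The crux is to show that \textbf{every flow dart of the $\pvec_G$-circulation lies in the projection of $C$ into $G$}. A dart $d$ of $G$ carries positive flow exactly when the face to its right has potential $1$ and the face to its left has potential $0$; equivalently, some copy of the right face of $d$ is an interior face of $C$, while \emph{no} copy of the left face of $d$ is. Choosing a copy $\bar d$ of $d$ whose right face lies in $D$, the left face of $\bar d$ is a copy of the left face of $d$ and hence lies outside $D$, so $\bar d$ separates a face of $D$ from a face not in $D$; therefore $\bar d$ is a dart of $C$ and $d$ is in the projection of $C$. This is also the most delicate step: it uses that \emph{no} copy of the left face of $d$ is interior --- not merely that some copy of the right face is --- which rests on the topological fact that $C$ bounds the single bounded region $D$, with the $f_t$-region on the complementary (unbounded) side.

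Finally, I would decompose the nonzero clockwise circulation $\pvec_G$ into flow-carrying clockwise simple cycles as in Lemma~\ref{lem:maps-of-clockwise-cycles} (via Lemma~\ref{lem:circ-to-cycles}); concretely, the boundary of the set of faces $f$ with $\pvec_G[f] = 1$ is a union of simple clockwise cycles of $G$. Taking any one of these, all of its darts are flow darts of the circulation, hence --- by the crux step --- each is the image of a dart of $C$; picking one such dart of $C$ per cycle dart exhibits a subset of the darts of $C$ whose projection is this clockwise cycle of $G$, which is exactly the claim.
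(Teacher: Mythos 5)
Your proof is correct, and it is a clean adaptation of the potential-pushing argument the paper uses to prove Lemma~\ref{lem:maps-of-clockwise-cycles}: there the potential on $\cal G$ is $1$ below an infinite simple path $\cal R$, here it is $1$ inside the bounded disk $D$; in both cases it is pushed to $G$ by taking a max over copies and the resulting clockwise circulation is decomposed via Lemma~\ref{lem:circ-to-cycles}. The paper's own (omitted) proof is stated to be ``very similar to that for Lemma~\ref{lem:mapping}'', which is the short cut-open/Observation~\ref{obs:iso} argument, not the potential argument; if that cross-reference is literal, the authors had a different, shorter route in mind, though the near-identical macro names make a misreference to Lemma~\ref{lem:maps-of-clockwise-cycles} plausible, in which case your route is the intended one. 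Your ``crux step'' is a genuine strengthening of what is needed: you show that every flow dart of $\pvec_G$ lifts to a dart of $C$, which is more explicit than anything the paper writes and is what makes the final extraction of a cycle go through.

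One small imprecision worth flagging (which is inherited from the paper's own loose appeal to Lemma~\ref{lem:circ-to-cycles} in the proof of Lemma~\ref{lem:maps-of-clockwise-cycles}, since that lemma is stated only for \emph{leftmost} circulations): it is not true in general that the boundary of $\set{f : \pvec_G[f]=1}$ is a union of simple \emph{clockwise} cycles. If the projection $S$ of $D$ into $G$ is multiply connected --- which can happen, e.g.\ when $D$ is large enough that $S$ contains every bounded face of $G$, so that $S$ is an annulus separating $f_s$ from $f_t$ --- then the boundary component surrounding a hole of $S$ is a counterclockwise cycle. The argument still goes through because the \emph{outer} boundary component of each part of $S$ (equivalently, the boundary of the $f_\infty$-containing component of $S^{c}$) is always a simple clockwise flow cycle, and by your crux step each of its darts lifts to $C$. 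Restating the last paragraph to take that outer boundary cycle, rather than ``any one'' of the boundary cycles, closes the gap without changing anything else.
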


% \begin{proof}
%   Consider the potential function $\pvec_{\cal G}$ over the faces of
%   $\cal G$ that assigns $1$ to every face interior to $C$ and $0$ to
%   all other faces.  By definition, $\pvec_{\cal G}$ defines a \cw
%   circulation and is the potential function that witnesses that $C$ is
%   clockwise.  Define the potential function $\pvec_G$ where
%   $\pvec_G[f] = \sum_{\bar f \in {\cal G}[f]} \pvec_{\cal G}[\bar f]$
%   for face $f$ of $G$; $\pvec_G$ also defines a clockwise circulation
%   (for $f_\infty = f_t$).  By Lemma~\ref{lem:circ-to-cycles}, there is
%   a clockwise cycle in $G$ whose darts have potential $0$ on one side
%   and positive potential on the other side.  By construction, copies
%   of these darts are in $C$.
% \end{proof}

The following lemma is key in bounding the size of the finite portion
of $\cal G$ that we will need to consider in our algorithms.

\begin{lemma}[Pigeonhole]
  \label{lem:pigeonhole}
  Let $P$ be a simple path in $G$.  Let $\bar P$ be an isomorphic copy
  of $P$ in $\cal G$.  $\bar P$ contains a dart of at most $\phi+2$
  copies of $G$ in ${\cal G}$.  If $P$ may only use $s$ and $t$ as
  endpoints, then $\bar P$ contains darts in at most $\phi$ copies of
  $G$ in $\cal G$.
\end{lemma}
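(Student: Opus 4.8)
The plan is to interpret $\phi$ topologically: $\phi$ is the minimum number of faces a curve from $s$ to $t$ must pass through, which on the cylinder means that any arc from the $s$-end to the $t$-end must cross at least... wait — actually I should think of $\phi$ as giving a lower bound on how "tall" a strip must be. Let me reconsider: the key object is the covering strip, and the copies $G^k$ of $G$ are stacked **horizontally** (left to right), so "$\bar P$ contains a dart of copy $G^k$" means $\bar P$ reaches horizontally into the $k$-th translate. I would set up the argument as follows.

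First I would fix a shortest $s$-to-$t$ curve $\pi$ in $G$ realizing $\phi$, i.e., a curve passing through exactly $\phi$ faces; lift it to the cover to get a curve $\bar\pi$ from the bottom ($s$-end) to the top ($t$-end) of the strip. This lifted curve, together with the covering-deck translation $\tau$ that shifts copy $G^k$ to $G^{k+1}$, gives a family $\{\tau^k \bar\pi\}$ of disjoint (or almost-disjoint) vertical curves partitioning the strip into "vertical slabs", each slab meeting only $O(\phi)$ consecutive copies of $G$ — more precisely, a single fundamental domain $G^k$ spans a bounded number of these slabs, and crossing from $G^k$ to $G^{k+c}$ forces crossing roughly $c/\phi \cdot \phi = c$ of the faces of some lift of $\pi$. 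Then I would argue: if $\bar P$ touched $\phi+3$ distinct copies of $G$, then since $P$ is a simple path that shares only endpoints with... hmm, this is where I need the endpoint hypothesis.

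Here is the cleaner line I would actually pursue. For the first statement, $P$ is an arbitrary simple path. Consider the two endpoints of $\bar P$; they lie in copies $G^i$ and $G^j$ with, say, $i\le j$, and $\bar P$ visits copies only in the range $[i', j']\supseteq[i,j]$. The span $j'-i'$ is what I must bound by $\phi+1$ (to get $\phi+2$ copies). I would take the portion of $\bar P$ realizing the maximum horizontal extent and project it to $G$: it yields a closed curve (or a curve between two lifts of the same vertex of $G$) that, going from $G^{i'}$ to $G^{j'}$, must — by Lemma~\ref{lem:maps-of-clockwise-cycles} applied to its sub-segments between successive copies of a common vertex — contain a clockwise cycle each time it advances a copy; but more to the point, lifting a geodesic $s$-to-$t$ curve $\pi$ and using that a path crossing from slab to slab must cross a copy of $\pi$, I count faces. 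A curve that advances $k$ copies must cross $k-1$ consecutive copies of $\pi$ (the copies $\pi^{i'+1},\dots,\pi^{i'+k-1}$), each crossing using at least one face, and these faces are all distinct in $\cal G$; mapped down to $G$, a simple path $P$ can repeat a face of $G$ at most... no. Instead: the faces crossed, mapped to $G$, together with $f_s$ and $f_t$ form a closed $s$-to-$t$ curve in $G$ using at most (number of copies of $G$ that $\bar P$ meets) faces, hence by minimality of $\phi$ this number is at least $\phi$; the $+2$ slack accounts for $f_s,f_t$. For the sharper second statement, when $P$'s only endpoints are $s$ and $t$, the closed curve built from $\bar P$'s projection already begins and ends at $s$ and $t$ without needing to append $f_s,f_t$ separately, and one re-examines the slab-counting to see the two endpoint slabs are not both "wasted", yielding the bound $\phi$ rather than $\phi+2$.

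The main obstacle I expect is making the slab/face-counting rigorous: precisely relating "$\bar P$ meets $k$ copies of $G$" to "the projection of $\bar P$ sweeps through $\ge k-c$ faces of $G$ that form (the interesting part of) a legal $s$-to-$t$ curve", and controlling the constant $c\in\{0,1,2\}$ so the two claimed bounds ($\phi+2$ and $\phi$) come out exactly. The clean way is probably: take the maximal-extent subpath of $\bar P$, close it up to a bi-infinite curve $\cal R$ in $\cal G$ (as in the proof of Lemma~\ref{lem:maps-of-clockwise-cycles}), observe $\cal R$ separates the strip so any lift $\bar\pi$ of the geodesic curve $\pi$ must cross $\cal R$; the translates $\tau^k\bar\pi$ that cross $\cal R$ are as numerous as the copies spanned, and their crossing-faces in $G$ are distinct because $\pi$ is simple and passes each face of $G$ at most once — then compare with $\phi$. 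I would also need to double-check Observation~\ref{obs:iso} and the deck-transformation action are invoked correctly so that "copy of $G$" is unambiguous.
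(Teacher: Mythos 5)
Your high‑level instinct — lift a geodesic $s$‑to‑$t$ curve $\pi$ to the strip, and count how often $\bar P$ crosses the translates $\tau^k\bar\pi$ — is the same starting point as the paper. But the bookkeeping you attempt is face‑based, and that is where the argument breaks. You say that the faces where $\bar P$ (or its closed‑up bi‑infinite curve $\cal R$) crosses the $\tau^k\bar\pi$ ``are distinct because $\pi$ is simple and passes each face of $G$ at most once.'' Simplicity of $\pi$ limits $\pi$ to $\phi$ faces, but it gives no reason the crossings of $\bar P$ with \emph{different} translates $\tau^k\bar\pi$ should project to \emph{distinct} faces of $G$. Worse, the hypothesis you actually have available — that $P$ is a simple \emph{path}, i.e. visits each \emph{vertex} at most once — says nothing about how many times $P$ can run alongside a given face. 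So the face‑counting step has no leverage; you can't derive a contradiction from ``two crossings land in the same face.'' (Your earlier attempt, ``hence by minimality of $\phi$ this number is at least $\phi$,'' also has the inequality pointing the wrong way: you need an \emph{upper} bound on the number of copies, not a lower bound.)

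The paper's fix is precisely to convert the curve $\pi$ into a \emph{vertex} object so that simplicity of $P$ can be applied. It takes the $\phi$ faces of the geodesic curve, routes an artificial $s$‑to‑$t$ path $\Pi$ through them so that $\Pi$ has $\phi+1$ vertices of $G$ and only its \emph{edges} are artificial, and lifts $\Pi$ to boundaries $\Pi^k$ in $\cal G$. Because only artificial edges cross these boundaries, a lifted path $\bar P$ of genuine $G$‑darts can only pass from copy $G^{k-1}_\Pi$ to $G^{k}_\Pi$ by visiting a \emph{vertex} of $\Pi^{k}$. Now every extra copy of $G$ that $\bar P$ reaches forces a visit to a vertex in a fresh fiber over $\Pi$, and since $\bar P$ maps bijectively onto the simple path $P$, these must be distinct vertices of $G$ — pigeonhole on the $\phi+1$ vertices of $\Pi$ gives the $\phi+2$ bound, and restricting the endpoints to $\{s,t\}$ removes $s$ and $t$ from the pool of available $\Pi$‑vertices, yielding the sharper $\phi$ bound. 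So the missing idea in your proposal is this discretization: replace the geodesic \emph{curve} (which lives in faces, where simplicity of $P$ has no bite) by a geodesic vertex \emph{path} $\Pi$ (where it does).
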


\begin{proof} %proof checked
  Let $\Pi$ be the smallest set of faces that a curve from $s$ to $t$
  drawn on the plane in which $G$ is embedded must cross.  Embed in
  $G$ an artificial edge in each of these faces to connect $s$ to $t$;
  we refer to this path as $\Pi$ as well; $\Pi$ has $\phi+1$ vertices.

  Consider an isomorphic copy $\bar P$ of $P$ in ${\cal G}$.  Let $i$ ($j$) be the
  minimum (maximum) index such that $\bar P$ contains a vertex $u^i$
  ($v^j$) of $G^i_\Pi$ ($G^j_\Pi$). Since
  $\bar P$ must have darts in copies $i, \ldots, j$ of $G$, $\bar P$ must
  contain a vertex of $\Pi^k$ for $k = i+1, \ldots, j$.  If $j-(i+1) >
  \phi+1$, then by the pigeonhole principle, $\bar P$ must visit two
  copies of the same vertex of $\Pi$, contradicting that $P$ is
  simple.  Therefore, $j-i \le \phi+2$, proving the first part of the
  lemma.

  Suppose $P$ may only use $s$ or $t$ as endpoints.  Let $P'$ be the subpath of
  $P$ with $s$ and/or $t$ removed.  Now, defining $i$ and $j$ as
  above, $\bar P'$ visits the two copies of the same vertex if
  $j-(i+1) > \phi-1$ as $\bar P$ does not contain $s$ or $t$.  Therefore
  $j-i \le \phi$, proving the second part of the lemma.\qed
\end{proof}

\section{Maximum flow, shortest paths equivalences}
\label{sec:upperm-paths-univ}

We are ready to illustrate an equivalence between dual shortest paths
and maximum flow.  We assume that our initial capacities $\cvec$ are
\cw acyclic.  We identify a finite portion of $\cal G$ (containing $k$
copies of $G$), ${\cal G}_k$ (Section~\ref{sec:finite-cover}).  In
this finite cover, we compute the leftmost maximum flow: We first
attach a super-source $S$, embedded below the cover, to each source
with a large-capacity arc and a super-sink $T$, embedded above the
cover, to each sink with a large-capacity arc (creating graph ${\cal
  G}_k^{ST}$ and capacities $\cvec^{ST}$).  We can then find the leftmost
max $ST$-flow $\fvec^{ST}$ via Hassin's method.  We show how to
extract from $\fvec^{ST}$ the value of the maximum $st$-flow $|\fvec|$
in $G$ (Section~\ref{sec:finite-cover}).  Given this value, we are
able to change the capacities $\cvec^{ST}$ so that we can extract
$\fvec$ from $\fvec^{ST}$ (Section~\ref{sec:flow}).

This method requires a factor $k$ additional space.  In
Section~\ref{sec:discuss}, we discuss how this additional space
requirement could be removed and how a dual shortest-path algorithm
could be used to simulate an augmenting-paths algorithm in $G$ even
though $s$ and $t$ are not on a common face.  This, we believe, is a
promising avenue for developing a practical algorithm for maximum flow
in planar graphs.

Our proof technique is as follows.  We can, using dual-shortest path
techniques, compute $\fvec^{ST}$, the leftmost max $ST$-flow in ${\cal
  G}^{ST}$.  We wish to find $\fvec$, the leftmost max $st$-flow in
$G$.  We relate $\fvec^{ST}$ to $\fvec$ by first copying $\fvec$ into
${\cal G}^{ST}_k$ and then modifying the resulting flow, without
changing the flow assignment in the central copy of $G$ of ${\cal
  G}^{ST}_k$. We can therefore extract $\fvec$ from the flow
assignment given by $\fvec^{ST}$ in this central copy of
$G$.

\subsection{The finite cover} \label{sec:finite-cover}

Let $L$ be the leftmost residual $s$-to-$t$ path in $G$ and let
$\fvec$ be the leftmost maximum $st$-flow in $G$; since the capacities
are \cw acyclic, $\fvec$ is acyclic.

Let ${\cal G}_k$ be the finite component of ${\cal G} \snip L^0 \snip
L^k$; ${\cal G}_k$ is a finite cover made of $k$ copies of $G$ (plus
an extra copy of $L$), bounded on the left by $L$ and on the right by
the extra copy of $L$.  The sinks of ${\cal G}_k$ are numbered $t^0,
t^1, \ldots, t^k$ from left to right; we (re)number the sources from
left to right, $s^0, s^1, \ldots, s^k$.  We will refer to the $1^{st}$
through $k^{th}$ copies of $G$ in ${\cal G}_k$ according to the
natural left-to-right ordering.

We start by constructing a maximum multi-source, multi-sink maximum
flow in ${\cal G}_k$ ($\fvec_1$), from a maximum pseudoflow $\fvec_0$
(Lemma~\ref{lem:convert-to-pseudoflow}), which is constructed from
$\fvec$ (Lemma~\ref{lem:convert-to-flow}).  This construction will
guarantee that the flow in the central copy of ${\cal G}_k$ is exactly
$\fvec$.  However, $\fvec_1$ is not necessarily leftmost and so, we
cannot necessarily compute it.  We relate $\fvec_1$ to the leftmost
max $ST$-flow in ${\cal G}^{ST}_k$ (which we can compute), in the next
section.

Let $\fvec_0$ be a flow assignment for ${\cal G}_k$ given by
$\fvec_0[\bar d] = \fvec[d],\ \forall \bar d \in {\cal G}[d]$.  We
overload $\cvec$ to represent capacities in both $G$ and ${\cal G}_k$,
where capacities in ${\cal G}_k$ are inherited from $G$ in the natural
way.

\begin{lemma} \label{lem:convert-to-pseudoflow}
  For $k > \phi+2$, $\fvec_0$ is a maximum pseudoflow with excess
  vertices on $L^0$ and deficit vertices on $L^k$.
\end{lemma}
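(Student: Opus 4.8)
The plan is to check the three assertions in turn---that $\fvec_0$ is a pseudoflow, that its excess is confined to $L^0$ and its deficit to $L^k$, and that it is \emph{maximum}---with the hypothesis $k>\phi+2$ entering only in the last two.

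That $\fvec_0$ is a pseudoflow is immediate: it is antisymmetric because $\fvec$ is, and since the capacities of ${\cal G}_k$ are inherited dart-by-dart from $G$ we have $\fvec_0[\bar d]=\fvec[d]\le\cvec[d]=\cvec[\bar d]$ for every dart $\bar d$ of ${\cal G}_k$ lying over a dart $d$ of $G$. For the location of the imbalance I would argue from Observation~\ref{obs:iso}: ${\cal G}_k$ is $k$ isomorphic copies of $G\snip L$ glued along copies of $L$, so a vertex $\bar v$ lying on neither $L^0$ nor $L^k$ recovers all the incident darts of its image $v$ in $G$---either because $\bar v$ is interior to one copy, or because $\bar v$ is an interior vertex of some interface $L^j$ ($0<j<k$), where the gluing restores the $\leftd{L}$ darts of $v$ from one side and the $\rightd{L}$ darts from the other. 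Since $\fvec_0$ copies $\fvec$ dart-wise, the net flow of $\fvec_0$ at $\bar v$ equals the net flow of $\fvec$ at $v$, which vanishes whenever $v\notin\{s,t\}$. It then remains to account for the copies of $s$ and $t$: to show, using $k>\phi+2$, that no ``interior'' copy of $s$ or $t$ is left carrying imbalance and that the residual surplus accumulates on $L^0$ while the deficit accumulates on $L^k$ (the sign being fixed by the convention $f_\infty=f_t$, with $t$ at the top of the cylinder and $s$ at the bottom). I expect this bookkeeping to be the main obstacle, since it is the step most sensitive to the precise conventions of the cover and the only place where the lower bound on $k$ genuinely enters the imbalance count.

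For maximality I would invoke the criterion that a pseudoflow is maximum iff ${\cal G}_k$ has no residual ($\cvec_{\fvec_0}$-)path from a source or excess vertex to a deficit or sink vertex. Let $Q$ be such a path, taken simple. If the endpoints of $Q$ are copies of a common vertex ($s$ or $t$), then---because the excess end lies on the leftmost interface $L^0$, the deficit end on the rightmost $L^k$, and the sources and sinks in between, so the source/excess end is weakly to the left of the deficit/sink end---the image of $Q$ in $G$ contains a clockwise cycle by Lemma~\ref{lem:maps-of-clockwise-cycles}; that cycle is residual with respect to $\cvec_\fvec$, contradicting that $\cvec_\fvec$ is clockwise-acyclic (as $\fvec$ is leftmost). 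If $Q$ runs from a copy of $s$ to a copy of $t$, then its image is a residual $s$-to-$t$ path in $G$, contradicting the maximality of $\fvec$ (Theorem~\ref{thm:no-flow-from-left} may be used to control how $Q$ crosses $L$), and the remaining mixed endpoint combinations reduce to these two by the same arguments. The one genuinely new case is that $Q$ crosses the entire strip, from a vertex on $L^0$ to a vertex on $L^k$ (or the reverse): then $Q$ is a simple path whose endpoints are copies of $s$, of $t$, or of vertices of $L$, so by the Pigeonhole Lemma~\ref{lem:pigeonhole} it meets darts in at most $\phi+2$ copies of $G$; but getting from $L^0$ to $L^k$ forces it through all $k>\phi+2$ copies, a contradiction. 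Exhausting the cases gives that $\fvec_0$ is maximum.
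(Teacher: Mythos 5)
Your skeleton (feasibility, location of the imbalance, then the pseudoflow-maximality criterion with a case analysis on the endpoints of a hypothetical residual path) is the same as the paper's, but several steps are left in a state where they would not go through. First, the sign claim. You defer it as ``bookkeeping,'' guess that the difficulty concerns the copies of $s$ and $t$, and guess that this is where $k>\phi+2$ enters. None of this is right: the interior copies of $s$ and $t$ are exactly the sources and sinks of ${\cal G}_k$, so their imbalance needs no accounting; what must be shown is that a vertex of $L^0$ cannot be a \emph{deficit} vertex and a vertex of $L^k$ cannot be an \emph{excess} vertex; and the bound on $k$ plays no role in this part at all. The missing tool is Theorem~\ref{thm:no-flow-from-left}: because no flow path of $\fvec$ crosses $L$ from left to right, removing the $\leftd{L}$ darts at a vertex of $L^0$ (respectively the $\rightd{L}$ darts at a vertex of $L^k$) cannot leave it with net outflow (respectively net inflow). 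Without this you have only located the imbalanced vertices on $L^0\cup L^k$, and the signs are precisely what the maximality cases consume.

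Second, the maximality argument. The cases you dismiss with ``the remaining mixed endpoint combinations reduce to these two by the same arguments''---a residual path $R$ from an excess vertex $v^+$ on $L^0$ to a sink $t^i$, and symmetrically from a source to a deficit vertex---are where the paper does its real work, and they do not reduce: the endpoints of $R$ are copies of \emph{different} vertices, so Lemma~\ref{lem:maps-of-clockwise-cycles} does not apply to $R$, and the image of $R$ in $G$ is merely a residual $v$-to-$t$ path, which contradicts nothing. The paper's fix is to produce a $v$-to-$t$ flow path $Q$ of $\fvec$ witnessing the excess at $v^+$, lift it to a copy $\bar Q$ that Theorem~\ref{thm:no-flow-from-left} places left of $L^0$, and apply Lemma~\ref{lem:maps-of-clockwise-cycles} to $\rev{\bar Q}\circ R$, a residual path between two copies of $t$ (with a separate direct clockwise-cycle construction against $L^0$ in the boundary case $j=0$, which also covers the ``weak'' inequality your first case glosses over---Lemma~\ref{lem:maps-of-clockwise-cycles} needs $i<j$ strictly). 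Finally, in the excess-to-deficit case you invoke the Pigeonhole Lemma to cap at $\phi+2$ the number of copies a simple path can meet; but Lemma~\ref{lem:pigeonhole} applies to isomorphic copies of \emph{simple paths of $G$}, and a simple residual path in ${\cal G}_k$ need not project to a simple path in $G$, so no contradiction follows from its traversing all $k$ copies. The correct use of pigeonholing here is the one you already made in your first case: extract two copies $u^i,u^j$ ($i<j$) of the same vertex on the path and let Lemma~\ref{lem:maps-of-clockwise-cycles} manufacture a residual clockwise cycle in $G$, contradicting the leftmostness of $\fvec$.
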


\begin{proof} %proof checked
  First notice that $\fvec_0$ is balanced for all vertices in ${\cal
    G}_k$ except those on $L^0$ and $L^k$: they may contain excess and
  deficit vertices. By Lemma~\ref{thm:no-flow-from-left}, there are no
  flow paths in $\fvec$ that push flow from the left of $L$ to the
  right of $L$. It follows that $L^0$ only contains vertices with
  excess or balanced vertices, and $L^k$ only contains vertices with
  deficit or balanced vertices.

  In ${\cal G}_k$ and $G$, we use residual to mean \wrt
  $\cvec_{\fvec_0}$ and $\cvec_\fvec$, respectively.  A pseudoflow is
  maximum if there is no \xty{S\cup V^+}{T\cup V^-} residual path $P$,
  where $V^+$ is the set of excess vertices and $V^-$ is the set of
  deficit vertices. If $P$ is a path from a source to a sink in ${\cal
    G}_k$, then $P$ maps to an \xty{s}{t} path in $G$; therefore, $P$
  cannot be residual in either graph.

  It remains to show that there are no \xty{V^+}{T}, \xty{S}{V^-} or
  \xty{V^+}{V^-} residual paths.  The proof for the first two cases
  are symmetric; we only prove one here.
  
  \paragraph{There are no \xty{V^+}{T} residual paths.} Consider for a
  moment the flow assignment for ${\cal G}$: $\fvec'[\bar d] =
  \fvec[d]$, $\forall \bar d \in {\cal G}[d]$.  For $v^+ \in V^+$ to be
  an excess vertex, there must be a $v$-to-$t$ flow path $Q$ in
  $\fvec$ where $v$ is the vertex in $G$ that $v^+$ maps to.  There is
  a copy $\bar Q$ of $Q$ in $\cal G$ that starts at $v^+$ (where we
  remind the reader that ${\cal G}_k$ is a subgraph of $\cal G$).  By
  Lemma~\ref{thm:no-flow-from-left}, $Q$ does not cross $L$ from left
  to right and so $\bar Q$ is left of $L^0$.

  Now, for a contradiction, let $R$ be a \xty{v^+}{t^i} residual path,
  for some $t^i \in T$.  Since the reverse of a flow path is residual,
  $\rev{Q} \circ R$ is a residual $t^j$-to-$t^i$ path in $\cal G$
  (\wrt $\fvec'$).  Since $\bar Q$ is left of $L^0$, $j \le i$.  If $j
  = 0$, $\bar Q \circ \rev{L^0[v^+,t^0]}$ would be a clockwise cycle,
  which, by Lemma~\ref{lem:cwmap}, would witness a clockwise cycle in
  $G$; since $Q$ is residual \wrt the original capacities $\cvec$,
  this cycle would contradict the leftmost-ness of $L$.  Therefore $j
  < i$.  By Lemma~\ref{lem:maps-of-clockwise-cycles}, $\rev{Q} \circ
  R$ implies the existence of a clockwise residual cycle in $G$,
  contradicting the leftmost-ness of $\fvec$.

  \paragraph{There are no \xty{V^+}{V^-} residual paths.} Let $v^+$
  and $v^-$ be vertices in $V^+$ and $V^-$, respectively. A
  \xty{v^+}{v^-} path $R$ must go through all $k$ copies of $G$ in
  ${\cal G}_k$.  Since $k > \phi+2$, by the Pigeonhole Lemma, $R$
  contains two copies $u^i$ and $u^j$ of the same vertex $u$ in $G$,
  and in fact, must contain a subpath from $u^i$ to $u^j$ for $i < j$.
  By Lemma~\ref{lem:maps-of-clockwise-cycles}, this implies that there
  is a clockwise residual cycle in $G$, contradicting $\fvec$ being
  leftmost. \qed
\end{proof}

\begin{lemma} \label{lem:convert-to-flow} There is a maximum $ST$-flow
  $\fvec_1$ in ${\cal G}_k$ that is obtained from $\fvec_0$ by
  removing flow on darts in the first and last $\phi$ copies of $G$ in
  ${\cal G}_k$.  Further, the amount of flow into sink $t^i$ for $i
  \le k-\phi$ and the amount of flow out of source $s^j$ for $j \ge
  \phi$ is the same in $\fvec_0$ and $\fvec_1$.
\end{lemma}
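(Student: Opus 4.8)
The plan is to obtain $\fvec_1$ from $\fvec_0$ by the procedure recalled in the Background for converting an \emph{acyclic} maximum pseudoflow into a maximum flow: repeatedly reduce the flow along a source-to-excess or deficit-to-sink flow path until every vertex of ${\cal G}_k$ other than its sources and sinks is balanced. Two facts make this applicable. By Lemma~\ref{lem:convert-to-pseudoflow}, $\fvec_0$ is a maximum pseudoflow whose excess vertices all lie on $L^0$ and whose deficit vertices all lie on $L^k$. And $\fvec_0$ is acyclic: $\cvec$ is \cw acyclic and $\fvec$ is leftmost, so $\fvec$ is acyclic~\cite{BK09}, and a flow cycle of $\fvec_0$ in the bounded region ${\cal G}_k$ would project under the covering map to a closed flow walk of $\fvec$ in $G$ and hence contain a simple flow cycle of $\fvec$ --- impossible. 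So the procedure produces a maximum $ST$-flow $\fvec_1$, and each reduction changes $\fvec_0$ only on the darts of a single source-to-excess or deficit-to-sink flow path and their reverses; in particular $\fvec_1$ and $\fvec_0$ agree off the union of these paths.

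It therefore remains to show that every source-to-excess flow path of $\fvec_0$ is contained in the first $\phi$ copies of $G$ in ${\cal G}_k$, and, symmetrically, that every deficit-to-sink flow path is contained in the last $\phi$ copies. This immediately gives the first assertion, since $\fvec_1$ then agrees with $\fvec_0$ on every dart outside the first and last $\phi$ copies. For the second assertion, observe that a flow path can neither pass through a source nor through a sink of ${\cal G}_k$ --- a source of ${\cal G}_k$ has no incoming, and a sink no outgoing, flow dart of $\fvec_0$, because $\fvec$ has none at $s$ or $t$ --- so a reduced source-to-excess path is incident to no sink at all, while a reduced deficit-to-sink path meets a sink only by terminating at it, which by the confinement it can do only at sinks lying in the last $\phi$ copies; with a little bookkeeping over the indices this yields the stated equalities for the flow into $t^i$, $i\le k-\phi$, and out of $s^j$, $j\ge\phi$.

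To prove the confinement, fix a source-to-excess flow path $W$, from a source $s^j$ to an excess vertex $v$; since $v$ is neither a source nor a sink of ${\cal G}_k$ it is balanced in $\fvec$, hence maps to an internal vertex of $L$ and lies on the left boundary $L^0$. Take $W$ simple; then $\fvec$ being acyclic, together with Lemma~\ref{lem:maps-of-clockwise-cycles} applied to $L$, forces the projection of $W$ to be simple, so $W$ is a copy of a simple path in $G$. The heart of the argument is that $W$ crosses the cuts $L^i$ only from right to left: a left-to-right crossing would project to a left-to-right crossing of $L$ by a flow walk of $\fvec$, and then --- taking a decomposition of the acyclic flow $\fvec$ into simple \xty{s}{t} flow paths --- one extracts a simple \xty{s}{t} flow path of $\fvec$ that crosses $L$ from left to right, contradicting Theorem~\ref{thm:no-flow-from-left}. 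Consequently $W$ moves monotonically leftward across the cuts and, ending on the leftmost copy $L^0$, meets an initial block $\{1,\dots,m\}$ of copies of $G$. Completing $W$ along $L^0$ to reach $t^0$ and shortcutting to a simple path gives a simple path that still meets copy $m$ --- since $s^j$ has degree $1$ in the completed graph, the shortcut must begin with $W$'s first dart --- and whose endpoints lie over $\{s,t\}$; the Pigeonhole Lemma then bounds $m$ by $\phi$, so $W$ lies in the first $\phi$ copies. The deficit-to-sink case is symmetric, using the right boundary $L^k$ and completing at the $L^k$-end to reach $s^k$.

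The step I expect to be the real obstacle is transporting Theorem~\ref{thm:no-flow-from-left}, which is stated for simple \xty{s}{t} flow paths in $G$, across the covering map to arbitrary flow walks in ${\cal G}_k$ --- i.e., showing that a flow walk of $\fvec$ crossing $L$ from left to right genuinely forces a simple \xty{s}{t} flow path of $\fvec$ to do the same, given the local and somewhat delicate definition of ``crossing''. A secondary point is the bookkeeping: the Pigeonhole Lemma by itself bounds only by $\phi+2$ copies a path whose endpoints are not both over $\{s,t\}$, so the bound $\phi$ rests on the $t^0$/$s^k$-completion above, and it is this exact count, together with the hypothesis $k>\phi+2$ of Lemma~\ref{lem:convert-to-pseudoflow}, that must be handled carefully to get the flow-conservation statements exactly as claimed.
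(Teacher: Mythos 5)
Your proposal follows the paper's proof in its essential structure: establish that $\fvec_0$ is acyclic, invoke the flow-cancelling reduction of an acyclic maximum pseudoflow to a maximum flow, observe that the cancelled paths are lifts of simple flow paths of $\fvec$, and bound their spread by the Pigeonhole Lemma. That much matches.

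Where you diverge, you add machinery that isn't needed and that you yourself flag as the ``real obstacle.'' The monotone-crossing argument (that $W$ crosses the copies $L^i$ only right-to-left, via Theorem~\ref{thm:no-flow-from-left} and a path decomposition of $\fvec$) is superfluous: by Lemma~\ref{lem:convert-to-pseudoflow} the excess vertices lie on $L^0$ and the deficit vertices on $L^k$, so a source-to-excess flow path necessarily \emph{ends} on $L^0$. Since any path in the strip occupies a set of consecutive copies, the path automatically occupies an initial block $\{1,\dots,m\}$ — no leftward monotonicity is required to locate the block at the left boundary, only to bound $m$, which Pigeonhole does. The paper's proof simply says this and moves on.

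Your second worry — that the $\phi$ bound in the Pigeonhole Lemma only applies when both endpoints project to $\{s,t\}$ — is a legitimate observation, and the paper does pass over it silently. But your ``complete to $t^0$ and shortcut'' fix does not close the gap: to apply the Pigeonhole Lemma you need a \emph{simple path in $G$} with an isomorphic copy in $\cal G$. Shortcutting the walk $W \circ L^0[v,t^0]$ inside $\cal G$ yields a path whose projection to $G$ can still repeat vertices (and so is not an isomorphic copy of a simple path), while shortcutting the projected walk in $G$ can discard exactly the portion that witnesses copy $m$; the degree-one observation about $s^j$ doesn't control where the shortcut lands. The cleaner route, and the one consistent with how the paper reads its own lemma, is to note that since $\fvec$ is acyclic, $t$ has no outgoing flow darts and $s$ has no incoming flow darts, so the projection of a cancelled path uses $s$ and $t$ only as endpoints; that is exactly the hypothesis under which the Pigeonhole Lemma's sharper $\phi$ bound is invoked (the lemma's proof strips whichever of $s,t$ happens to appear). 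If one insists on being scrupulous the constant may degrade to $\phi+O(1)$, but the downstream lemmas in Section~\ref{sec:upperm-paths-univ} have slack to absorb that.
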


\begin{proof} % proof checked
  Since $\cvec$ are clockwise-acyclic capacities, and $\fvec$ is
  leftmost, $\fvec$ is acyclic.  It follows that $\fvec_0$ is acyclic.
  Since $\fvec_0$ is an acyclic maximum pseudoflow, it can be
  converted to a maximum flow by flow-cancelling
  techniques~\cite{Hochbaum08}; i.e., by removing flow from
  source-to-excess flow paths and deficit-to-sink flow paths.  Let $P$
  be such a flow path.  $P$ maps to a flow path in $G$ and so must map
  to a simple path in $G$.  By the Pigeonhole Lemma, $P$ must be
  contained within $\phi$ copies of $G$.  This proves the first part
  of the lemma.  Since $P$ cannot start at $s^j$ for $j \ge \phi$
  without going through more than $\phi$ copies (and likewise, $P$
  cannot end at $t^i$ for $i \le k-\phi$), the second part of the
  lemma follows. \qed
\end{proof}

% Suppose we were able to compute the flow $\fvec_1$ in ${\cal
%   G}_{2\phi+1}$ as guaranteed by the first part of
% Lemma~\ref{lem:convert-to-flow}.  Lemma~\ref{lem:convert-to-flow}
% shows that the flow assignment on the darts in the $\phi+1^{st}$ copy
% of $G$ in ${\cal G}_k$ (if $k > 2\phi+1$) is exactly the flow
% assignment $\fvec$.  However, dual shortest-path methods find leftmost
% flows and $\fvec_1$ is not necessarily leftmost.  In the next section,
% we relate $\fvec_1$ to the leftmost maximum flow $\fvec_L$, in such a
% way that allows us to trivially extract $|\fvec|$ from $\fvec_L$.  We
% then show how to extract $\fvec$ from a leftmost maximum flow if we
% know $|\fvec|$.

% TODO: BLAH BLAH BLAH

\subsection{Value of the maximum flow}\label{sec:value}

In the next lemma, we prove that from $\fvec^{ST}$, the leftmost maximum $ST$-flow in ${\cal G}_k^{ST}$, we can extract $|\fvec|$, the value of the maximum $st$-flow in $G$.

\begin{lemma} \label{lem:value}
  For $k \ge 4\phi$, the amount of flow through $s^{2\phi}$ in $\fvec^{ST}$, the
  leftmost maximum flow in ${\cal G}_k^{ST}$, is $|\fvec|$.
\end{lemma}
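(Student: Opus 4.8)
The plan is to compare $\fvec^{ST}$ with the maximum $ST$-flow $\fvec_1$ of Lemma~\ref{lem:convert-to-flow}, extended to a flow $\tilde\fvec_1$ of ${\cal G}_k^{ST}$ by routing the deficit of each source $s^j$ along the artificial arc $(S,s^j)$ and absorbing the excess of each sink $t^i$ along $(t^i,T)$. Since the artificial arcs have large capacity and $\fvec_1$ is a maximum $ST$-flow of ${\cal G}_k$, the flow $\tilde\fvec_1$ is a maximum flow of ${\cal G}_k^{ST}$, so $|\tilde\fvec_1|=|\fvec^{ST}|$. Because $k\ge 4\phi$, the vertex $s^{2\phi}$ is an interior source (not on $L^0$ or $L^k$) and therefore carries, in $\fvec_0$, the full net flow $-|\fvec|$ of $s$; since $2\phi\ge\phi$, Lemma~\ref{lem:convert-to-flow} says the amount of flow leaving $s^{2\phi}$ is the same in $\fvec_0$ and $\fvec_1$, so the arc $(S,s^{2\phi})$ carries exactly $|\fvec|$ in $\tilde\fvec_1$. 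It therefore suffices to prove that $\fvec^{ST}$ and $\tilde\fvec_1$ assign the same amount of flow to $(S,s^{2\phi})$.

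Set $\gamma=\fvec^{ST}-\tilde\fvec_1$; this is a circulation of ${\cal G}_k^{ST}$. Decompose $\gamma$ into simple flow cycles $Z_1,\dots,Z_m$, each carrying a positive amount of flow. For every such $Z_i$ and every dart $d\in Z_i$ we have $\tilde\fvec_1[d]<\fvec^{ST}[d]\le\cvec[d]$, so $d$ is residual \wrt $\tilde\fvec_1$, and, by antisymmetry, $\fvec^{ST}[\rev d]<\tilde\fvec_1[\rev d]\le\cvec[\rev d]$, so $\rev d$ is residual \wrt $\fvec^{ST}$. Thus $\rev{Z_i}$ is an entirely residual cycle \wrt the leftmost flow $\fvec^{ST}$; since $\fvec^{ST}$ is leftmost, $\rev{Z_i}$ cannot be clockwise, and as exactly one of a simple cycle and its reverse is clockwise, $Z_i$ is clockwise. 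Hence $\gamma$ decomposes into clockwise simple cycles that are entirely residual \wrt $\tilde\fvec_1$, and the amounts of flow $\fvec^{ST}$ and $\tilde\fvec_1$ put on $(S,s^{2\phi})$ differ by exactly the signed number of these cycles through $(S,s^{2\phi})$ (the reverse arc counted with the opposite sign).

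It remains to rule out any $Z_i$ through $(S,s^{2\phi})$ or $(s^{2\phi},S)$; this is the crux. Such a $Z_i$ passes through the artificial source $S$, entering and leaving it once, via arcs $(S,s^a)$ and $(s^b,S)$ with $2\phi\in\{a,b\}$. Analysing the planar embedding of the ``lens'' ${\cal G}_k^{ST}$ (whose outer face is $f_\infty=f_t$) shows that a clockwise simple cycle through $S$ must have $a<b$; and, since $k\ge 4\phi$ places $s^{2\phi}$ at least $2\phi>\phi$ copies of $G$ away from each of $L^0$ and $L^k$, such a $Z_i$ cannot also pass through $T$ (otherwise it would separate the outer face from itself). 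Hence $Z_i$ contains a subpath $W$ of ${\cal G}_k$ from $s^a$ to $s^b$ with $a<b$ and $2\phi\in\{a,b\}$. As $s^a\in G^a$ and $s^b\in G^b$ with $a<b$, Lemma~\ref{lem:maps-of-clockwise-cycles} shows the image of $W$ in $G$ contains a clockwise cycle $C$. If $W$ stays within the central copies of ${\cal G}_k$ (by $k\ge4\phi$, this fails only when the far endpoint $s^a$ or $s^b$ lies in the first or last $\phi$ copies), then $\tilde\fvec_1$ restricts there to exact copies of $\fvec$, and since $W$ is entirely residual \wrt $\tilde\fvec_1$, the cycle $C$ is entirely residual \wrt $\fvec$ --- contradicting that $\fvec$ is leftmost. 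In the remaining case $W$ spans more than $\phi$ copies of $G$, and the same contradiction is obtained by applying the Pigeonhole Lemma to the portion of $\rev{Z_i}$ through those copies, using its residuality \wrt $\fvec^{ST}$ and \wrt the original capacities $\cvec$, exactly as in the proof of Lemma~\ref{lem:convert-to-pseudoflow}. Either way no such $Z_i$ exists, so $\gamma$ puts no flow on $(S,s^{2\phi})$, and the flow through $s^{2\phi}$ in $\fvec^{ST}$ is $|\fvec|$.

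The step I expect to be the main obstacle is the last paragraph: making precise the planar-topology constraint on clockwise cycles that pass through the artificial terminals $S$ and $T$ of ${\cal G}_k^{ST}$ --- so that reaching the central source $s^{2\phi}$ forces the cycle to run through many copies of $G$ --- and the careful bookkeeping that transfers residuality from ${\cal G}_k^{ST}$ (\wrt $\fvec^{ST}$ or $\tilde\fvec_1$) back to $G$ (\wrt $\fvec$ on the central copies, and \wrt the original capacities near the boundary copies), so that Lemma~\ref{lem:maps-of-clockwise-cycles} and the Pigeonhole Lemma actually deliver a clockwise residual cycle contradicting the leftmost-ness of $\fvec$.
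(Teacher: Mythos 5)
Your overall strategy matches the paper's: extend $\fvec_1$ to a maximum $ST$-flow $\tilde\fvec_1$ of ${\cal G}_k^{ST}$ (the paper's $\fvec_1^{ST}$), observe that $\fvec^{ST}-\tilde\fvec_1$ is a circulation decomposing into clockwise simple cycles, and show that no such cycle can affect $s^{2\phi}$ by forcing it (via topology and the Pigeonhole Lemma) to produce a clockwise cycle in $G$ that is residual \wrt $\fvec$, contradicting leftmostness. Your derivation that each cycle $Z_i$ is clockwise --- using that $\rev{Z_i}$ is residual \wrt the leftmost $\fvec^{ST}$ --- is a correct small variant of the paper's direct appeal to Lemma~\ref{lem:circ-to-cycles} applied to the leftmost circulation.

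However, the last paragraph has genuine gaps. First, your reason for excluding $T$ --- that a cycle through both $S$ and $T$ ``would separate the outer face from itself'' --- is not a valid topological obstruction; a simple cycle through two vertices of the outer face is perfectly possible. The correct argument (and the one the paper uses) is about residuality: since $Z_i$ is residual \wrt $\cvec_{\tilde\fvec_1}$, a $Z_i$ through both $S$ and $T$ would contain a residual $S$-to-$T$ path, contradicting that $\tilde\fvec_1$ is a maximum $ST$-flow. Second, your case split is not exhaustive: you assert that $W$ stays within the central copies ``unless the far endpoint $s^a$ or $s^b$ lies in the first or last $\phi$ copies,'' but $W$ can wander into the boundary copies even when both endpoints are central. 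The paper handles this cleanly by noting that if $P$ (your $W$) visits the first or last $\phi$ copies at all, then --- because $s^{2\phi}$ is at least $\phi$ copies from either end when $k\ge 4\phi$ --- $P$ must contain a subpath crossing from copy $\phi$ to $s^{2\phi}$ (or symmetrically from $s^{2\phi}$ to copy $3\phi+1$), lying entirely within the central copies, and the Pigeonhole Lemma is applied to \emph{that} subpath. Third, your appeal to ``residuality \wrt $\fvec^{ST}$ and \wrt the original capacities $\cvec$, exactly as in the proof of Lemma~\ref{lem:convert-to-pseudoflow}'' does not close the argument: residuality in ${\cal G}_k^{ST}$ \wrt $\fvec^{ST}$ does not transfer to a statement about $G$, and residuality \wrt the original $\cvec$ is neither available (a dart residual \wrt $\fvec^{ST}$ need not be residual \wrt $\cvec$, since $\fvec^{ST}$ may be negative on it) nor what is needed. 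What is needed, and what the paper establishes, is a clockwise cycle in $G$ that is residual \wrt $\cvec_\fvec$, which requires the pigeonholed subpath to lie in the copies where (by Lemma~\ref{lem:convert-to-flow}) $\tilde\fvec_1$ agrees with a copy of $\fvec$.
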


\begin{proof}
  We show that the amount of flow leaving $s^{2\phi}$ in $\fvec^{ST}$
  is the same as in $\fvec_1$.  By Lemma~\ref{lem:convert-to-flow},
  the amount of flow leaving $s^{2\phi}$ is the same in $\fvec_1$ as
  $\fvec_0$ which is the same as the amount of flow leaving $s$ in
  $\fvec$; this proves the lemma.
  
  First extend $\fvec_1$ into an $ST$-flow, $\fvec_1^{ST}$, in ${\cal
    G}_k^{ST}$ in the natural way (by setting the flow on the darts
  adjacent to $S$ and $T$ to satisfy the balance constraint).  Since
  $\fvec_1$ is a maximum multi-source, multi-sink flow in ${\cal
    G}_k$, $\fvec_1^{ST}$ is a max $ST$-flow in ${\cal G}_k^{ST}$.

  To convert $\fvec_1^{ST}$ into a leftmost flow, we must saturate the
  clockwise residual cycles; this is done with a \cw circulation,
  which, by Lemma~\ref{lem:circ-to-cycles}, can be converted into a
  set of flow carrying \cw simple cycles.  Suppose for a contradiction
  that one such cycle $C$ changes the amount of flow through
  $s^{2\phi}$.  Since $C$ is a flow cycle in the circulation (which is
  residual \wrt $\cvec_{\fvec_1^{ST}}$), $C$ is residual \wrt
  $\cvec_{\fvec_1^{ST}}$.  If $C$ changes the amount of flow through
  $s^{2\phi}$, $C$ must go through $S$.  $C$ cannot visit $T$, for if
  it did, $C$ would include a source-to-sink path.  This path is also
  residual \wrt $\cvec_{\fvec_1^{ST}}$, contradicting that
  $\fvec_1^{ST}$ is maximum.  Therefore $C$ must contain a
  $s^i$-to-$s^{2\phi}$ path $P$ that is in ${\cal G}_k$; $P$ is
  residual \wrt $\cvec_{\fvec_1}$.  Since $C$ is \cw, $i < 2\phi$.

  We first argue that $P$ must use a dart in the first or last $\phi$
  copies of $G$ in ${\cal G}_k$.  Suppose otherwise.  Then $P$ must
  map to a set $P'$ of darts in $G$ which, by
  Lemma~\ref{lem:convert-to-flow} are residual \wrt $\cvec_\fvec$.  By
  Lemma~\ref{lem:maps-of-clockwise-cycles}, $P'$ contains a clockwise
  cycle, contradicting the leftmostness of $\fvec$.

  Therefore, $P$ must visit a dart in the first or last $\phi$ copies
  of $G$.  It follows that $P$ must cross either from the $\phi^{th}$
  copy to $s^{2\phi}$ or from $s^{2\phi}$ to the $3\phi^{th}+1$ copy
  (possible, since $k \ge 4\phi$).  Then, by the Pigeonhole Lemma ,
  $P$ contains a subpath $Q$ that goes from $\bar v$ to $\bar v'$
  where these vertices are copies of the same vertex that are not in
  the first or last $\phi$ copies and such that $\bar v$ is in an
  earlier copy of $G$ than $\bar v'$.  By
  Lemma~\ref{lem:maps-of-clockwise-cycles}, the map of $Q$ contains a
  clockwise cycle in $G$.  Since $Q$ does not contain darts in the
  first or last $\phi$ copies of $G$, by
  Lemma~\ref{lem:convert-to-flow}, this cycle is residual \wrt $\fvec$
  in $G$, again contradicting that $\fvec$ is leftmost. \qed
\end{proof}

\subsection{Maximum flow}\label{sec:flow}

Now, suppose we know $|\fvec|$ (as per Lemma~\ref{sec:value}).  We
change the capacities of the arcs into $T$ and out of $S$ in ${\cal
  G}_k^{ST}$ to $|\fvec|$.  Call these capacities $\cvec^{|\fvec|}$.
Now, $\fvec_1^{ST}$, as defined in the previous section, respects
$\cvec^{|\fvec|}$ since, by Lemmas~\ref{lem:convert-to-pseudoflow}
and~\ref{lem:convert-to-flow}, the amount of flow leaving any source
or entering any sink in $\fvec_1$ is at most $|\fvec|$.  We prove a
lemma similar to Lemma~\ref{lem:convert-to-flow}.  The proof of this lemma is similar to the proof of Lemma~\ref{lem:value}, so we defer it to Appendix~\ref{app:pf}.

\begin{lemma} \label{lem:convert-to-leftmost} $\fvec_1^{ST}$ can be converted
  into a leftmost maximum $ST$-flow $\fvec^{|\fvec|}$ for the
  capacities $\cvec^{|\fvec|}$ while not changing the flow on darts in
  the first or last $2\phi$ copies of $G$ in ${\cal G}_k$.
\end{lemma}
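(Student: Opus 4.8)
The plan is to mirror the proof of Lemma~\ref{lem:value}. The first observation is that $\fvec_1^{ST}$ is already a maximum $ST$-flow for the capacities $\cvec^{|\fvec|}$: since $\cvec^{|\fvec|}\le\cvec$ on every dart, the value of a maximum $ST$-flow for $\cvec^{|\fvec|}$ is at most the one for $\cvec$, namely $|\fvec_1^{ST}|$; and $\fvec_1^{ST}$ respects $\cvec^{|\fvec|}$ (as noted just before the lemma), so the two values coincide. Hence obtaining the leftmost maximum $ST$-flow $\fvec^{|\fvec|}$ from $\fvec_1^{ST}$ consists exactly, as in Lemma~\ref{lem:value}, of saturating the clockwise residual cycles by adding a clockwise circulation, which by Lemma~\ref{lem:circ-to-cycles} decomposes into flow-carrying clockwise simple cycles. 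It therefore suffices to show that no cycle $C$ of this decomposition uses a dart lying in one of the first or last $2\phi$ copies of $G$ in ${\cal G}_k$.

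Suppose otherwise; by symmetry, let $C$ use a dart $d$ in one of the first $2\phi$ copies. As in Lemma~\ref{lem:value}, $C$ cannot use arcs incident to both $S$ and $T$, since then it would contain a residual source-to-sink path, contradicting that $\fvec_1^{ST}$ is maximum; so $C$ meets only $S$, meets only $T$, or stays inside ${\cal G}_k$. In each case the goal is to produce a clockwise cycle of $G$ that is residual with respect to $\cvec_\fvec$, contradicting the leftmostness of $\fvec$. If $C$ stays inside ${\cal G}_k$ and all of its darts lie outside the first and last $\phi$ copies, then Lemma~\ref{lem:cwmap} maps a subset of its darts to a clockwise cycle of $G$ which, by Lemma~\ref{lem:convert-to-flow}, is residual with respect to $\cvec_\fvec$ --- done. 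If instead $C$ meets $S$ (the case of $T$ is symmetric, with sinks in the role of sources), then $C$ contains a maximal subpath $P$ inside ${\cal G}_k$ joining two sources $s^i$ and $s^j$ with $i<j$ (as $C$ is clockwise); moreover, since by Lemma~\ref{lem:convert-to-flow} $\fvec_1$ leaves each interior source with exactly $|\fvec|$ units and $\cvec^{|\fvec|}$ caps those arcs at $|\fvec|$, the arc $S\to s^i$ can be residual in $\fvec_1^{ST}$ only if $i$ is among the first $\phi$ copies. Thus $s^i$ and $s^j$ are distinct copies of the single vertex $s$ with $i<j$, so by Lemma~\ref{lem:maps-of-clockwise-cycles} the map of $P$ into $G$ contains a clockwise cycle; and combining $i<\phi$, $i<j$, the presence of $d$ in the first $2\phi$ copies, and $k\ge 4\phi$, one argues that $P$ must in fact traverse at least $\phi+3$ copies none of which are among the first or last $\phi$. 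The Pigeonhole Lemma then gives two copies $\bar v,\bar v'$ of one vertex $v$ of $G$ in distinct such copies, and the subpath of $P$ between $\bar v$ and $\bar v'$ maps, by Lemmas~\ref{lem:maps-of-clockwise-cycles} and~\ref{lem:convert-to-flow}, to a clockwise $\cvec_\fvec$-residual cycle of $G$. The remaining subcase --- $C$ inside ${\cal G}_k$ but reaching into the first or last $\phi$ copies --- is handled the same way, again using $k\ge 4\phi$ to force a span of $\ge\phi+3$ interior copies and then the Pigeonhole Lemma.

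The main obstacle is exactly this last piece of bookkeeping: ruling out the ``short'' clockwise residual cycles that hug the boundary copies of ${\cal G}_k$, possibly through $S$ or $T$. For these, the residual capacities are not those inherited from $\fvec$ (since $\fvec_1$ was obtained from $\fvec_0$ by deleting flow in the first and last $\phi$ copies), so one cannot map them directly into $G$; the argument must instead show that any such cycle is nonetheless forced to sweep through enough copies strictly between the first and last $\phi$ for the Pigeonhole Lemma to bite. This is where the extra $\phi$ of slack --- ``$2\phi$'' rather than ``$\phi$'' --- and the hypothesis $k\ge 4\phi$ are consumed, and where the proof is genuinely more delicate than that of Lemma~\ref{lem:value} (in which only the single source $s^{2\phi}$ was tracked and the offending cycle was automatically forced through $S$). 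The one new ingredient relative to Lemma~\ref{lem:value} is the observation that, once the source and sink arcs are capped at $|\fvec|$, all interior such arcs are saturated in $\fvec_1^{ST}$, which pins the attachment points of the $S$- and $T$-cycles to the boundary copies and lets the traversal argument go through.
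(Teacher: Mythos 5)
Your reading of the lemma statement is literal, but it turns out to be the opposite of what the paper's proof actually establishes and of what the surrounding argument needs. You set out to show that \emph{no} cycle of the circulation taking $\fvec_1^{ST}$ to $\fvec^{|\fvec|}$ touches the first or last $2\phi$ copies. What is true (and what the paper proves) is the dual claim: every such cycle is \emph{confined} to the first or last roughly $2\phi$ copies. The circulation must saturate clockwise residual cycles that live precisely near the two boundaries, because $\fvec_1$ differs from $\fvec_0$ (hence from $\fvec$) exactly on the first and last $\phi$ copies, and because capping the $S$- and $T$-arcs at $|\fvec|$ saturates all but the boundary-most of them. Concretely the paper shows (i) every circulation cycle must visit a vertex in the first or last $\phi$ copies, and (ii) no circulation cycle contains a dart past copy $2\phi+1$ (resp.\ before copy $k-2\phi$). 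Combined with Lemma~\ref{lem:convert-to-flow}, this leaves $\fvec$ intact in the central copy, which is the point; the phrase ``not changing the flow on darts in the first or last $2\phi$ copies'' should be read as ``changing the flow \emph{only} on darts in the first or last (roughly) $2\phi$ copies.''

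Your proof breaks exactly where you flag the difficulty. You assert that a clockwise residual cycle $C$ through $S$, with $Ss^i\in C$ for some $i<\phi$ and with some dart $d$ of $C$ in the first $2\phi$ copies, ``must in fact traverse at least $\phi+3$ copies none of which are among the first or last $\phi$.'' Nothing forces that: such a $C$ can be, and in general is, entirely contained in copies $1$ through $2\phi+1$, with $d$ and the whole $S$-to-$S$ excursion hugging the left end. Then the portion of $C$ outside the first $\phi$ copies need not span $\phi$ boundaries, the Pigeonhole Lemma does not apply, and no clockwise cycle residual with respect to $\cvec_\fvec$ can be extracted --- there is simply no contradiction to derive, because the statement you are trying to contradict is true. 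The same failure occurs in your last subcase (a cycle inside ${\cal G}_k$ reaching into the first or last $\phi$ copies). The correct move establishes $i<\phi$ exactly as you do, but then runs the Pigeonhole argument \emph{outward} rather than inward: suppose for contradiction that $C$ contains a dart in a copy beyond $2\phi+1$; then $C$ has a subpath spanning copies $\phi+1$ through $2\phi+1$ whose darts lie outside the first and last $\phi$ copies, hence (by Lemma~\ref{lem:convert-to-flow}) are residual for $\cvec_\fvec$ in $G$, and (by the Pigeonhole Lemma and Lemma~\ref{lem:maps-of-clockwise-cycles}) yield a clockwise residual cycle in $G$, contradicting that $\fvec$ is leftmost. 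That confines $C$ to the boundary region; it does not exclude $C$ from it.
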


To summarize, Lemmas~\ref{lem:convert-to-flow}
and~\ref{lem:convert-to-leftmost} guarantee that the maximum leftmost
$ST$-flow, $\fvec^{|\fvec|}$, in ${\cal G}_k^{ST}$ given capacities
$\cvec^{|\fvec|}$ has the same flow assignment on the darts in copy
$2\phi+1$ as $\fvec$ so long as $k \ge 4\phi+1$.  Starting from
scratch, we can find \cw acyclic capacities $\cvec$ via Khuller, Naor
and Klein's method~\cite{KNK93} (one shortest path computation); we
can find $|\fvec|$ (Lemma~\ref{lem:value}, a second shortest path
computation) and then $\fvec$ (Lemma~\ref{lem:convert-to-leftmost}, a
third shortest path computation).  Therefore, finding a maximum
$st$-flow in a directed planar graph $G$ is equivalent to three
shortest path computations: one in $G$ and two in a covering of $G$
that is $4\phi+1$ times larger than $G$.

\section{Discussion} \label{sec:discuss}

Borradaile and Klein's leftmost augmenting-paths algorithm also uses
\cw acyclic capacities as a starting point~\cite{BK09}.  Their
analysis showed that an arc can be augmented and possibly its reverse,
but, when this happens, the arc becomes unusable and will not be a
part of any further augmenting path.  In Appendix~\ref{sec:algo}, we
give an algorithm that implements the leftmost augmenting-paths
algorithm in at most $2\phi+2$ phases.  However, just as the analysis
in Sections~\ref{sec:upperm-paths-univ} doesn't use Borradaile and
Klein's notion of {\em unusability}, neither does the algorithm in
Appendix~\ref{sec:algo}.  In Appendix~\ref{app:convergence} we more
explicitly express the algorithm of
Section~\ref{sec:upperm-paths-univ} as an augmenting paths algorithm
(and remove the extra space requirement).  Since these algorithms are
all variants or direct implementations of the leftmost augmenting
paths algorithm, we should be able to appeal to the unusability of
Borradaile and Klein to tighten the analysis. Combining these ideas
may lead to algorithms that are both asymptotically and practically
superior.
\let\thefootnote\relax\footnotetext{{\em Acknowledgements:} The authors thank J\'{e}remy Barbay for
very helpful discussions.  This material is based upon work supported by
the National Science Foundation under Grant No.\ CCF-0963921.}

\bibliographystyle{plain}
%\bibliography{long,./refs}
\bibliography{maxflowcover.bbl}

\begin{thebibliography}{10}

\bibitem{BK09}
G.~Borradaile and P.~Klein.
\newblock An ${O}(n \log n)$ algorithm for maximum st-flow in a directed planar
  graph.
\newblock {\em J. ACM}, 56(2):1--30, 2009.

\bibitem{BKMNW11}
G.~Borradaile, P.~Klein, S.~Mozes, Y.~Nussbaum, and C.~Wulff-Nilsen.
\newblock Multiple-source multiple-sink maximum flow in directed planar graphs
  in near-linear time.
\newblock In {\em Proc. FOCS}, pages 170 --179, 2011.

\bibitem{Diestel05}
Reinhard Diestel.
\newblock {\em {Graph Theory}}.
\newblock Springer-Verlag, 2005.

\bibitem{Erickson10}
J.~Erickson.
\newblock Maximum flows and parametric shortest paths in planar graphs.
\newblock In {\em Proc. SODA}, pages 794--804, 2010.

\bibitem{FF56}
C.~Ford and D.~Fulkerson.
\newblock Maximal flow through a network.
\newblock {\em Canadian Journal of Mathematics}, 8:399--404, 1956.

\bibitem{Hassin81}
R.~Hassin.
\newblock Maximum flow in $(s,t)$ planar networks.
\newblock {\em IPL}, 13:107, 1981.

\bibitem{HJ85}
R.~Hassin and D.~B. Johnson.
\newblock An ${O}(n \log^2 n)$ algorithm for maximum flow in undirected planar
  networks.
\newblock {\em SIAM J. Comp.}, 14:612--624, 1985.

\bibitem{HKRS97}
M.~Henzinger, P.~Klein, S.~Rao, and S.~Subramanian.
\newblock Faster shortest-path algorithms for planar graphs.
\newblock {\em JCSS}, 55(1):3--23, 1997.

\bibitem{Hochbaum08}
Dorit~S. Hochbaum.
\newblock The pseudoflow algorithm: A new algorithm for the maximum-flow
  problem.
\newblock {\em Operations Research}, 56(4):992--1009, 2008.

\bibitem{Hu}
T.~C. Hu.
\newblock {\em Integer Programming and Network Flows}.
\newblock Addison-Wesley, 1969.

\bibitem{IS79}
A.~Itai and Y.~Shiloach.
\newblock Maximum flow in planar networks.
\newblock {\em SIAM J. Comp.}, 8:135--150, 1979.

\bibitem{JV82}
D.~B. Johnson and S.~Venkatesan.
\newblock Using divide and conquer to find flows in directed planar networks in
  ${O}(n^{3/2} \log n)$ time.
\newblock In {\em Proc. Allerton}, pages 898--905, 1982.

\bibitem{JV83}
Donald~B. Johnson and Shankar~M. Venkatesan.
\newblock Partition of planar flow networks.
\newblock In {\em Proc. FOCS}, pages 259--264, Washington, DC, USA, 1983.
  IEEE Computer Society.

\bibitem{KN11}
H.~Kaplan and Y.~Nussbaum.
\newblock Minimum st-cut in undirected planar graphs when the source and the
  sink are close.
\newblock In {\em Proc. STACS}, pages 117--128, 2011.

\bibitem{KN09}
Haim Kaplan and Yahav Nussbaum.
\newblock Maximum flow in directed planar graphs with vertex capacities.
\newblock In {\em Proc. ESA}, pages 397--407, 2009.

\bibitem{KNK93}
S.~Khuller, J.~Naor, and P.~Klein.
\newblock The lattice structure of flow in planar graphs.
\newblock {\em SIAM J. Disc. Math.}, 6(3):477--490, 1993.

\bibitem{reif83}
J.~Reif.
\newblock Minimum $s$-$t$ cut of a planar undirected network in ${O}(n \log^2
  n)$ time.
\newblock {\em SIAM J. Comp.}, 12:71--81, 1983.

\bibitem{ST83}
D.~Sleator and R.~Tarjan.
\newblock A data structure for dynamic trees.
\newblock {\em JCSS}, 26(3):362--391, 1983.

\bibitem{Spanier94}
E.H. Spanier.
\newblock {\em Algebraic Topology}.
\newblock Springer, 1994.

\bibitem{TW10}
Robert~E. Tarjan and Renato~F. Werneck.
\newblock Dynamic trees in practice.
\newblock {\em J. Exp. Algorithmics}, 14:5:4.5--5:4.23, January 2010.

\bibitem{Whitney1933}
H.~Whitney.
\newblock Planar graphs.
\newblock {\em Fundamenta mathematicae}, 21:73--84, 1933.

\end{thebibliography}

\appendix 
\newpage
\section{History of shortest-path based planar flow algorithms} \label{app:table}

\begin{table}[h]
\centering
\begin{tabular}{|llccl|}
  \hline
  \multicolumn{5}{|c|}{\bf History of planar maximum flow and minimum cut via shortest
    ss} \\ \hline
  {\bf year} & {\bf problem} & {\bf \# SP calls} & {\bf run
    time} &
  {\bf reference}\\
  1969 & min cut, directed $st$-planar & 1 & $\Theta(n)$ & Hu~\cite{Hu} \\
  1979 & max flow, directed $st$-planar & 1 & $\Theta(n)$ & Itai and Shiloach~\cite{IS79}\\
  1983 & max flow, directed  & $\phi$ & $\Theta(n \phi)$ & Johnson and
  Venkatesan~\cite{JV83} \\
  1983 & min cut, undirected  & $\le n$ & $O(n \log n)$ & Reif~\cite{reif83}\\
  1985 & max flow, undirected  & $\le n$ & $O(n \log n)$ & Hassin and
  Johnson~\cite{HJ85}\\
  2011 & min cut, undirected & $\le \phi$ & $O(n \log \phi)$ & Kaplan and
  Nussbaum~\cite{KN11} \\
  2012 & max flow, directed  & $\le 2\phi+2$ & $O(n \phi)$ & [this paper, appendix] \\
  2012 & max flow, directed  & 3 & $\Theta(n \phi)$ & [this
  paper] \\
  \hline
\end{tabular}
\caption{All run times are given using modern techniques, namely the
  linear-time shortest path~\cite{HKRS97} and the
  cycle-cancelling~\cite{KN09} algorithms, that post-date the results
  of last century.  Note that the last algorithm has an $O(\phi)$ factor increase in required space.}
\label{tab:history}
\end{table}

\section{Figures}\label{app:fig}

\begin{minipage}{1.0\linewidth}
    (a) \includegraphics[scale=0.8]{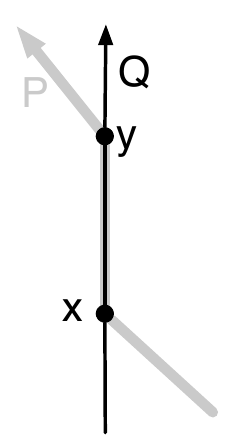}    (b) \includegraphics[scale=0.8]{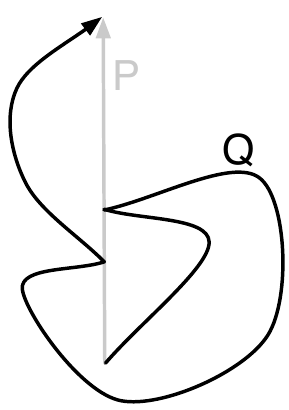}
    (c) \includegraphics[scale=0.5]{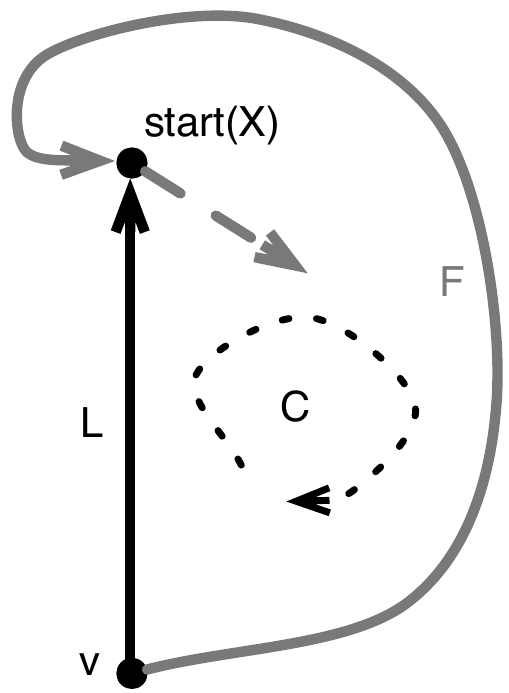}
  \end{minipage}

  \begin{enumerate}[(a)]
  \item $P$ crosses $Q$ from right to left: $P$ enters $Q$ on the right at $x$,
     and $P$ leaves $Q$ on the left at $y$. 
\item  $P$ and $Q$ are
     non-crossing. 
\item An illustration of the construction in the
     proof of Theorem~\ref{thm:no-flow-from-left}.
  \end{enumerate}
% \begin{figure}[ht]
%   \centering
%   \subfigure[]
%   {
%     \includegraphics[scale=0.8]{figures/crossing-a.pdf}
%     \label{fig:cross1}
%   }
%   \subfigure[]
%   {
%     \includegraphics[scale=0.8]{figures/crossing-b.pdf}
%     \label{fig:cross2}
%   }
%   \subfigure[]
%   {
%     \includegraphics[scale=0.6]{figures/crossing-proof.pdf}
%     \label{fig:appcrossing-proof}
%   }
%    \caption{(a) $P$ crosses $Q$ from right to left: $P$ enters $Q$ on the right at $x$,
%      and $P$ leaves $Q$ on the left at $y$. (b) $P$ and $Q$ are
%      non-crossing.  (c) An illustration of the construction in the
%      proof of Theorem~\ref{thm:no-flow-from-left}.}
%   \label{fig:crossing}
% \end{figure}

\begin{minipage}{1.0\linewidth}
    (a) \includegraphics[scale=0.5]{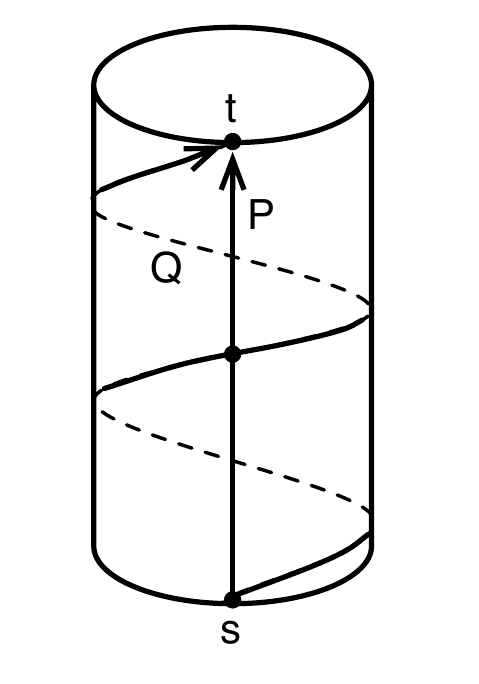}    (b) \includegraphics[scale=0.5]{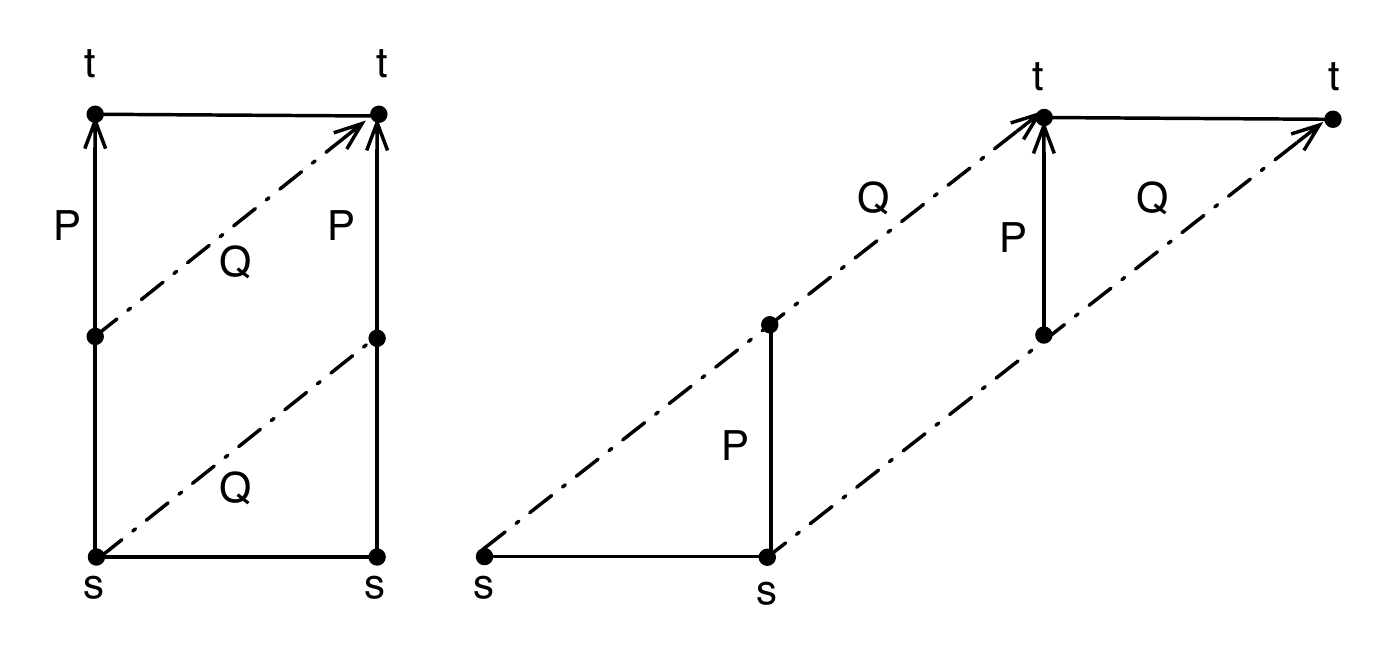}\\
    (c) \includegraphics[scale=0.5]{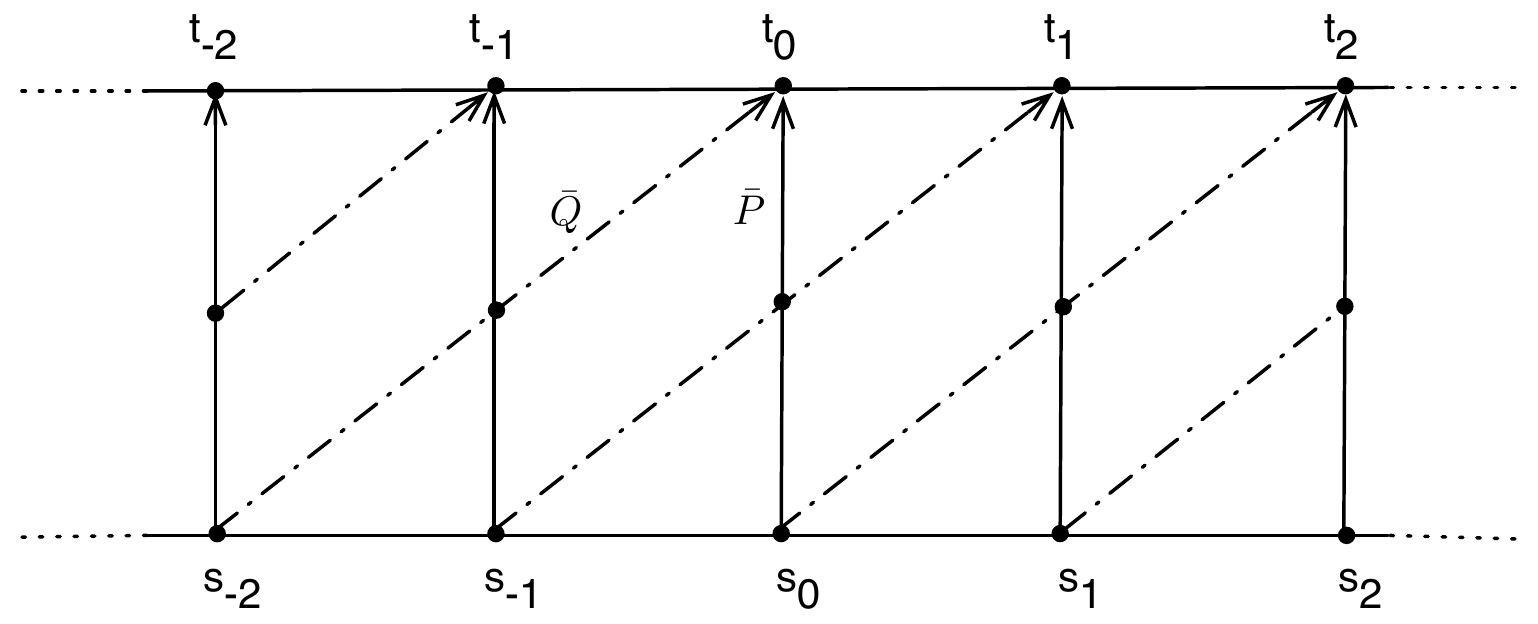}

\end{minipage}

\begin{enumerate}[(a)]
\item $G$ embedded on a cylinder with \xty{s}{t} paths $P$ and
  $Q$.
\item $G \snip P$ and $G \snip Q$.
\item The covering graph $\cal G$ of $G$.
\end{enumerate}

\section{Proof of Lemma~\ref{lem:convert-to-leftmost}} \label{app:pf}

  Let $\cal C$ be the circulation that takes $\fvec_1^{ST}$ to
  $\fvec^{|\fvec|}$. $\cal C$ is leftmost \wrt $\cvec_{\fvec^{ST}_1}$, and by
  Lemma~\ref{lem:circ-to-cycles} can be decomposed into a set of
  clockwise flow cycles. Consider one such cycle $C$. This cycle is
  residual \wrt $\cvec_{\fvec^{ST}_1}$. 

% ${\bf c}$ be the circulation that takes ${\bf f}_{{\cal
%       G}_k^{ST}}^\circlearrowleft$ to ${\bf f}_{{\cal
%       G}_k^{ST}}$. ${\bf c}$ is leftmost \wrt ${\bf c}_{{\bf f}_{{\cal
%       G}_k^{ST}}^\circlearrowleft}$, and by Lemma~\ref{lem:circ-to-cycles},
%   ${\bf c}$ can be decomposed into a set of clockwise flow-carrying
%   cycles.  Consider one such cycle $C$.  This cycle is residual \wrt ${\bf f}_{{\cal
%       G}_k^{ST}}^\circlearrowleft$.

  We first observe that $C$ must visit a vertex in the first $\phi$ copies
  or in the last $\phi$ copies of $G$ in ${\cal G}^{ST}_k$.

  If $C$ goes through $S$, $C$ contains consecutive darts $s^jS,
  Ss^i$.  By the second part of Lemma~\ref{lem:convert-to-flow} and
  the definition of $\fvec_1^{ST}$,
  $Ss^i$ is saturated for all $i \ge \phi$.  Therefore, $i < \phi$.  If
  $C$ goes through $T$, by a similar argument, $C$ must visit a vertex
  in the last $\phi$ copies of $G$.  If $C$ goes through neither $S$
  nor $T$ and if $C$ contains no vertex in the first or last $\phi$
  copies of $G$, then the darts of $C$ are residual in $G$ \wrt $\bf
  f$ by the first part of Lemma~\ref{lem:convert-to-flow}.  By
  Lemma~\ref{lem:cwmap}, $C$ maps to a clockwise cycle in $G$,
  contradicting the leftmost-ness of $\fvec$.

  So, we consider the case when $C$ visits a vertex in the first $\phi$
  copies of $G$; the case for the last $\phi$ copies is symmetric.

  Suppose for a contradiction that $C$ contains a dart $d$ in a copy
  of $G$ greater than $2\phi+1$.  Then $C$ contains a path $P$ that
  starts in copy $\phi$ and ends in copy $2\phi+1$ of $G$.  By the
  first part of Lemma~\ref{lem:convert-to-flow}, the darts of $P$ are
  residual in $\fvec_1$ and so they are also residual in $\fvec$.  By
  the pigeon-hole argument, $P$ contains two copies of the same
  vertex. By Lemma~\ref{lem:cwmap}, $P$ contains a subpath that maps
  to a clockwise cycle in $G$.  This contradicts that $\fvec$ is
  leftmost. \qed

\section{Leftmost maximum flows and shortest paths}
\label{app:max-flow-shortest}

Hassin's algorithm starts by transforming the max flow problem into a
max saturating circulation problem: it introduces a new
infinite-capacity directed arc $ts$
embedded so that every $s$-to-$t$ residual path forms a clockwise
cycle with $ts$ and then saturates the clockwise cycles. This
saturation of clockwise cycles is computed via finding a shortest-paths tree rooted at
$f^*_\infty$ in \footnote{Since we always embed $t$ to be on $f^\infty$ in $G$, $f^*_\infty$
is the head of the dual dart $(ts)^*$ in $G^*$.} the dual graph
$G^*$, interpreting primal capacities as dual distances. Let $\dvec$ be the shortest-path distances in $G^*$ from $f_\infty$. Consider the flow:

\begin{equation}
    \fvec[d] =
    \dvec[\head_{G^*}(d)]-\dvec[\tail_{G^*}(d)] \ \forall \text{ darts }
    d. \label{eq:st-flow}
\end{equation}

After removing the artificial arc $ts$ from $G$, $\fvec$ is a maximum
feasible flow from $s$ to $t$ in $G$, and $\dvec$ is the potential
assignment that induces $\fvec$.

Khuller, Naor and Klein~\cite{KNK93} later showed that a flow derived
from shortest-path distances in the dual in this manner is clockwise acyclic.

% Khuller, Naor and Klein~\cite{KNK93} showed that a flow that is
% derived from shortest-path distances in the dual is {\em clockwise acyclic}.  Formally:
% \begin{theorem}[Clockwise acyclic flows] \label{thm:cwsat}
%   Let $\dvec$ be the shortest-path distances in $G^*$ from $f_\infty$
%   interpreting capacities as lengths.  Then every clockwise cycle is non-residual \wrt the flow
%   \begin{equation}
%     \fvec[d] =
%     \dvec[\head_{G^*}(d)]-\dvec[\tail_{G^*}(d)] \ \forall \text{ darts }
%     d. \label{eq:st-flow}
% \end{equation}
% \end{theorem}

% Earlier, Hassin had used this idea to find a maximum $st$-flow in an
% $st$-planar graph~\cite{Hassin81}.  We can view his algorithm by turning it into a
% circulation problem: introduce a new infinite-capacity arc $ts$
% embedded so that every $s$-to-$t$ residual path forms a clockwise
% cycle with $ts$ and then saturate the clockwise cycles.

\section{Convergence}\label{app:convergence}

We interpret the shortest-paths computation used in Section~\ref{sec:flow}
 as an augmenting paths
algorithm in the original graph $G$, and illustrate that this will
converge in a finite number of augmentations.  

Consider the cover ${\cal G}^{ST}$, as defined in
Section~\ref{sec:upperm-paths-univ}, but with the restriction of $k$
copies removed, and let $\fvec^{ST}$ be the leftmost maximum $ST$-flow
in ${\cal G}^{ST}$. By Lemma~\ref{lem:convert-to-leftmost}, the flow on the darts in copy
$2\phi + 1$ of $G$ in ${\cal G}^{ST}$ corresponds to the leftmost flow in $G$.
If we relax the condition that the capacity of the arcs adjacent to $S$ and $T$ are finite, i.e. if we consider the leftmost
maximum $ST$-flow $\fvec^{ST}$ in ${\cal
  G}^{ST}$ with capacities $\cvec^{ST}$ instead of $\cvec^{|\fvec|}$, we will need to consider a much later copy. 

We show how to convert the flow $\fvec_1^{ST}$ into the leftmost maximum flow $\fvec^{ST}$,
while not changing the flow on darts in copies $2\phi+1$ or later {\em
  except} for saturating $\infty$-to-$T$ paths.

The construction follows much as that the proof of
Lemma~\ref{lem:convert-to-leftmost} except now we may have larger
clockwise cycles through $T$. Such a cycle $C$ would be of the form
$t^iT \circ T t^j \circ P$ where $i < j$. $P$ corresponds to a
counterclockwise residual cycle around $s$ in $G$, and so may be
residual \wrt $\bf f$. Beyond copy $2\phi+1$, all copies of $C$ would
be residual and their union would correspond to a set of
$\infty$-to-$t$-to-$T$ paths. As such, the flow through $t^i$ may be
greater than $|\fvec|$. However, the from-$\infty$ paths will be
saturated at some point: the amount of flow pushed along these paths
can be no greater than $U$, the sum of the capacities of $G$. As this
flow must be directed through some $t^i$ , for $i > 4\phi + U$ (where
the $4\phi$ is inherited from Lemma~\ref{lem:convert-to-leftmost}),
the flow through $t^i$ will be $|\fvec|$. Similarly, the flow in the
copies of $G$ beyond $4\phi+U$ will be the same as $\bf f$ except for
the counterclockwise cycles around $s$ that participate in the
$\infty$-to-$t$ paths. An example of this is illustrated in
Figure~\ref{fig:ccw-cycle-cover}.  

This observation does not seem very useful algorithmically.  However,
consider how Dijkstra's algorithm explores the dual of ${\cal
  G}^{ST}$: when a face is popped off the priority
queue, the wavefront of faces between those that have been popped and
those that have not corresponds to an augmenting path. We can follow
these augmenting paths in the original graph $G$.  These augmenting
paths push flow from $s$ to $t$, but when we are done with a path, we
do not update the capacities, but simply continue.  By the previous
observation, this version of augmenting paths will converge to a
maximum flow (in $O(n(U+\phi))$ iterations).

It is curious to note that although these augmentations correspond to
leftmost augmentations, the flow we end up with may not be leftmost,
as it may saturate a counterclockwise cycle around $s$, as in the
example of Figure~\ref{fig:ccw-cycle-cover}.

\begin{figure}[ht]
  \centering
\subfigure[]
{
\includegraphics[scale=0.6]{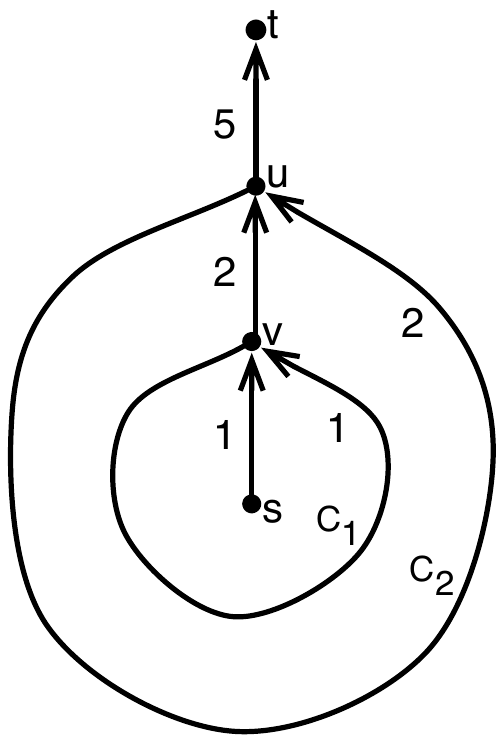}
}
\hspace{1em}
\subfigure[]
{
\includegraphics[scale=0.6]{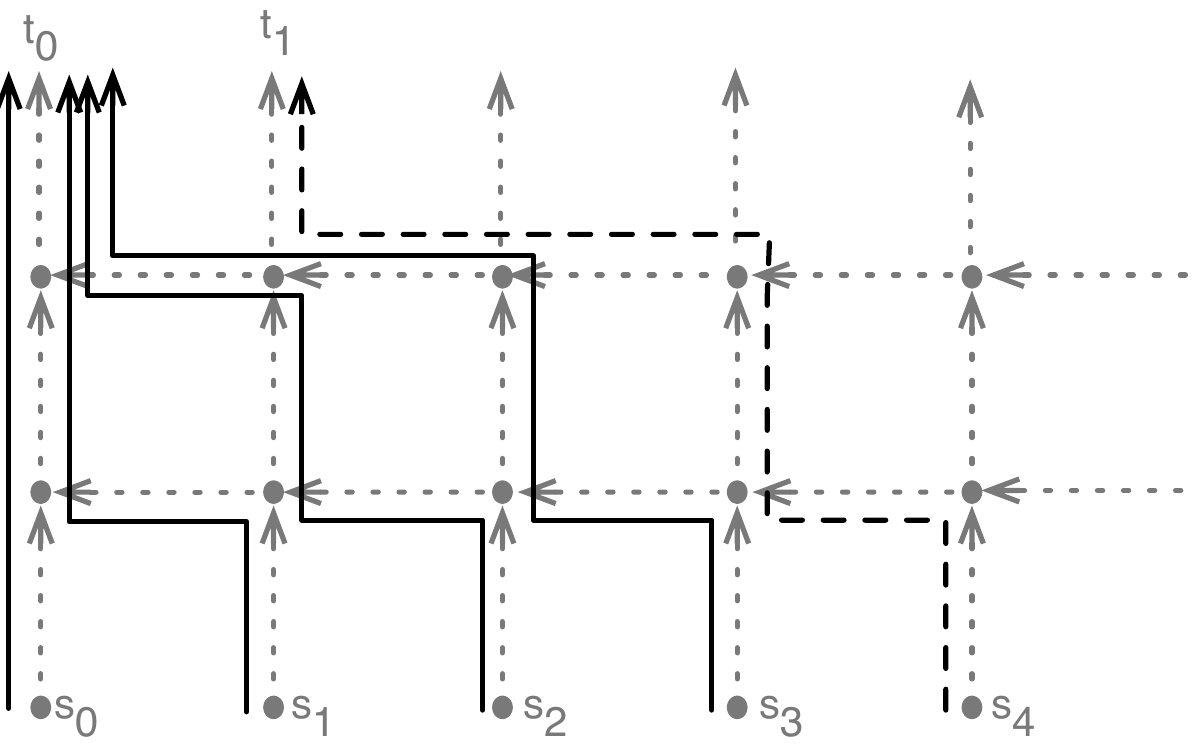}
}

  \caption{(a) $G$ contains two counterclockwise cycles around $s$: $C_1$ and $C_2$, of
    capacities 1 and 2 respectively. Upon routing 1 unit of flow on
    the \xty{s}{t} path, $C_1\circ vt\circ tv$ and $C_2\circ
    ut\circ tu$ are residual cycles through $t$ around $s$. (b) $G$'s
    half-infinite cover is given with grey dotted
    edges. Solid paths correspond to augmentations to $t_0$; the
    dashed path is an augmentation to $t_1$. Convergence
    occurs after $s_3$, instead of $s_0$.}
\label{fig:ccw-cycle-cover}
\end{figure}

\section{Maximum flow via sequential shortest-path computations}
\label{sec:algo}

The algorithm implicit in Section~\ref{sec:flow} both requires the value
of $\phi$ and will take time proportional to $\Theta(\phi)$.  Here we
present a different algorithm based on similar analysis techniques
that is adaptive in the sense that the running time is naturally
upper bounded by a function of $\phi$, but may have running time
significantly less.

We require a finer understanding of the way in which paths may cross.
We denote the sequence of crossings between $P$ and $Q$ by $P \otimes
Q$ with ordering inherited from $Q$. Although the ordering of
$P\otimes Q$ and $Q\otimes P$ may not be the same, we have that
$|P\otimes Q| = |Q\otimes P|$. 
While we only need part 2 of the following theorem, part 1 is used
within the proof for part 2 and may be of
independent interest.

\begin{theorem}[Leftmost Crossings]\label{thm:spiral}
  Let $P$ be the leftmost residual $s$-to-$t$ path \wrt $c_{{\bf
      c}^\circlearrowright}$ and let $Q$ be an $s$-to-$t$ path such
  that $\rev{Q}$ is residual \wrt $c_{{\bf c}^\circlearrowright}$. Then:
  \begin{enumerate}
  \item \label{pt1} The order of crossings is the same along both $P$ and $Q$.
    That is, either $X = Y$ or $X = \rev{Y}$ where $X$ and $Y$ are the $i^{th}$
    crossing in $P \otimes Q$ and $Q \otimes P$, respectively. 
  \item \label{pt2} $P$ crosses $Q$ from right to left at $X$ for all $X \in P \otimes Q$.
  \end{enumerate}
\end{theorem}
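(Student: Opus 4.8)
The plan is to lift the whole picture to the infinite cover $\mathcal G$ of Section~\ref{sec:universal-cover} and read the crossing structure off how a lift of $Q$ threads through the disjoint, linearly ordered copies $\dots,P^{-1},P^{0},P^{1},\dots$ of $P$. Fix a lift $\bar Q$ of $Q$, running from $s^{a}$ to $t^{b}$ (with $P^{i}$ running from $s^{i}$ to $t^{i}$); note that each $P^{i}$ is residual, being a copy of the residual path $P$, and separates the slab $G^{i-1}_P$ from $G^{i}_P$, while $\rev{\bar Q}$ is residual, being a copy of $\rev Q$. The first step is the dictionary between crossings in $G$ and in $\mathcal G$: every crossing of $P$ and $Q$ lifts to a crossing of $\bar Q$ with exactly one copy $P^{i}$, and unwinding the definitions of entering/leaving from the left and the right (together with Observation~\ref{obs:iso}) shows that $P$ crosses $Q$ from right to left there if and only if $\bar Q$ passes from $G^{i-1}_P$ to $G^{i}_P$, i.e.\ its slab index goes up by one, and from left to right if and only if the index goes down by one. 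In this language Part~\ref{pt2} is exactly the statement that the slab index of $\bar Q$ never decreases, and Part~\ref{pt1} will follow once we know that the copies $\bar Q$ crosses form a run of consecutive indices, each crossed once.

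For Part~\ref{pt1} I would argue by contradiction, assuming that two crossings consecutive along $Q$ occur in the opposite order along $P$, or that some copy $P^{i}$ is crossed by $\bar Q$ more than once. In each such configuration the offending portion $\sigma$ of $\bar Q$ — which lies inside a single slab, has its two endpoints over two points $v,w$ of $P$ with $v$ before $w$ along $P$, and whose reverse $\rev\sigma$ is residual — can be combined with sub-arcs of the residual copies $P^{i}$ (and of $P$ itself in $G$) to produce, depending on the orientation, either a residual clockwise cycle in $G$, which contradicts that $\cvec$ is clockwise-acyclic by Lemma~\ref{lem:cwmap}, or a residual $s$-to-$t$ path $P[s,v]\circ\rev{\sigma'}\circ P[w,t]$ (where $\sigma'$ is the image of $\sigma$) that lies left of $P$, contradicting that $P$ is leftmost; this is the direct generalization of the argument used for Theorem~\ref{thm:no-flow-from-left}. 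Deciding which of the two contradictions one lands in — and in particular that in every configuration at least one of them applies — is exactly where the hypothesis that $P$ is the \emph{leftmost} residual path, and not merely residual, is used. With all such configurations excluded, the copies met by $\bar Q$ form a consecutive run of indices, each crossed once, so the crossings occur in the same order along $P$ and along $Q$, proving Part~\ref{pt1}.

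Granted Part~\ref{pt1}, Part~\ref{pt2} reduces to excluding the \emph{decreasing} run, i.e.\ to showing $b\ge a$. If $b<a$, consider the path $P^{b}\circ\rev{\bar Q}$ in $\mathcal G$: it runs from $s^{b}$ up to $t^{b}$ along $P^{b}$ and then back to $s^{a}$ along $\rev{\bar Q}$, so it is a residual path joining the two copies $s^{b}$ and $s^{a}$ of the vertex $s$ across slabs with $b<a$. By Lemma~\ref{lem:maps-of-clockwise-cycles} its image in $G$ contains a clockwise cycle, which is residual because the entire path is residual, again contradicting that $\cvec$ is clockwise-acyclic. Hence $b\ge a$; combined with the consecutive-run structure from Part~\ref{pt1}, the run must be increasing, every crossing raises the slab index, and therefore $P$ crosses $Q$ from right to left at every crossing $X\in P\otimes Q$.

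The main obstacle is the topological bookkeeping inside Part~\ref{pt1}: correctly tracking, through the slab structure of $\mathcal G$ and the identification of a slab with $G\snip P$, which side of $P$ the detour $\sigma$ leaves from and arrives on, hence the orientation of the auxiliary cycle, so that one always lands in a case that genuinely contradicts leftmost-ness of $P$ or clockwise-acyclicity of $\cvec$. A secondary point that must be settled is that $Q$ need not be simple: since $\rev Q$ is residual and $\cvec$ is clockwise-acyclic, every closed sub-walk of $Q$ is clockwise, and this fact has to be folded into the crossing analysis (or used to reduce to the simple case) so that the lifts and the notion of ``the $i^{th}$ crossing'' are well behaved.
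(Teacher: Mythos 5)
Your plan is a genuinely different route from the paper's: the paper argues entirely in $G$, letting $i$ be the first index where $P\otimes Q$ and $Q\otimes P$ disagree, forming the two cycles $C_1=P[x_{i-1},x_{i+1}]\circ\rev{Q[x_{i-1},x_{i+1}]}$ (forced ccw by residuality) and $C_2=Q[x_i,x_{i+1}]\circ P[x_{i+1},x_i]$ (forced cw by leftmost-ness), and then obtains a contradiction from the fact that one of $C_1,C_2$ must enclose the other; you instead lift $Q$ into the cylindrical cover and read crossings off slab-index changes. The translation you set up (right-to-left crossing $\Leftrightarrow$ slab index increases by one) is correct, and the Part~\ref{pt2} application of Lemma~\ref{lem:maps-of-clockwise-cycles} to $P^{b}\circ\rev{\bar Q}$ is a reasonable use of the covering-space machinery. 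This is an appealing unification with the rest of the paper's cover-based arguments.

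However, there are two genuine gaps. First, the Part~\ref{pt1} step is a plan, not a proof, and the stated dichotomy does not obviously cover the cases. The ``offending'' arc $\sigma$ is a piece of $\bar Q$, so its image $\sigma'$ is a piece of $Q$, and only $\rev{\sigma'}$ is known to be residual. In the case where $P[s,v]\circ\rev{\sigma'}\circ P[w,t]$ is \emph{not} left of $P$, the cycle you would instead try to use is $P[v,w]\circ\sigma'$, which involves $\sigma'$ (not $\rev{\sigma'}$) and is therefore not known to be residual, so clockwise-acyclicity of $\cvec$ gives no contradiction. You would need an extra idea here (the paper avoids this by comparing $C_1$ against a nested cycle $C_2$ rather than against $P$ itself); also, the cycles that arise in your analysis can wind around the cylinder (they are closed walks through both a copy of $P$ and $Q$, which need not be contractible), so the cw/ccw bookkeeping is more delicate than in $G$ and Lemma~\ref{lem:cwmap} alone doesn't settle it. Second, the Part~\ref{pt2} argument excludes $b<a$, but a decreasing run of length one has $b=a$: $\bar Q$ can start at $s^{a}$ into $G^{a}_P$, cross $P^{a}$ once into $G^{a-1}_P$, and end at $t^{a}$. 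This is exactly the single-crossing ``$P$ crosses $Q$ from left to right'' case that Part~\ref{pt2} must rule out, yet $P^{b}\circ\rev{\bar Q}$ runs between two copies of $s$ with the \emph{same} index and Lemma~\ref{lem:maps-of-clockwise-cycles} does not apply. This edge case does lead to a contradiction (via the residual clockwise cycle $P[s,x]\circ\rev{Q[s,x]}$ with $Q[s,x]$ on the right of $P$), but that is a separate argument you have not given, and it is in fact the heart of Part~\ref{pt2}.
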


\begin{proof}
  If $|P\otimes Q| = 0$, the theorem is trivially true. 

  Let $P \otimes Q = \{X_1, X_2, \ldots, X_{|P \otimes Q|}\}$ and define
  $X_0 = s, X_{|P \otimes Q|+1} = t$. Likewise let $Q \otimes P =
  \{Y_1, Y_2, \ldots, Y_{|Q \otimes P|}\}$ and define
  $Y_0 = s, Y_{|Q \otimes P|+1} = t$.  For a contradiction to
  Part~\ref{pt1}, let $i$ be the smallest index
  such that  $X_i \notin \{ Y_i, \rev{Y_i}\}$.  Let $j$ be the index
  such that $Y_j \in \{X_{i+1},\rev{X_{i+1}}\}$.  Then $j \geq i$ by
  choice of $i$.

  Let $x_i$ be any vertex in $X_i$.  Since $P[x_{i-1},x_{i+1}]$ does
  not cross $Q$ at $x_i$, $P[x_{i-1},x_{i+1}]$ does not cross
  $Q[x_{i-1},x_{i+1}]$. Since $P$ and $\rev{Q}$ are residual,
  $C_1 = P[x_{i-1},x_{i+1}] \circ \rev{Q[x_{i-1},x_{i+1}]}$ is a simple
  counterclockwise cycle.

  Since there are no crossings in $Q[x_i,x_{i+1}]$, $C_2 = Q[x_i,x_{i+1}]
  \circ P[x_{i+1},x_i]$ is a simple cycle.  Since $P$ is leftmost
  residual, $P[x_{i+1},x_i]$ is left of $\rev{Q[x_i,x_{i+1}]}$ and
  $C_2$ is clockwise.

  Since $P$ and $Q$ are simple, $C_1$ and $C_2$ do not cross.
  Therefore it must be the case that either $C_1$ is enclosed by $C_2$
  or vice versa. See Fig.~\ref{fig:spiral}.

  \begin{description}
  \item[$C_2$ is enclosed by $C_1$] Since $P$ crosses $Q$ at
    $X_{i+1}$, $Q[x_{i+1},\cdot]$ must have a subpath in the strict
    interior of $C_1$.  Then, a maximal such subpath forms a
    counterclockwise cycle with a subpath of $P$, contradicting that
    $P$ is a leftmost residual path.
  \item[$C_1$ is enclosed by $C_2$] Since $P$ crosses $Q$ at $X_i$,
    $P[x_i,\cdot]$ must enter the strict interior of $C_1$.  Since
    $P[x_i,\cdot]$ does not cross $Q[x_i,x_{i+1}]$, $P[x_i,\cdot]$ is
    entirely enclosed by $C_1$, contradicting that $t$ is on the
    infinite face.
  \end{description}
  
  This proves part~\ref{pt1} of the theorem.
  Since $Q[x_i,\cdot]$ does not
  cross $P[x_i,x_{i+1}]$,  $Q[x_i,\cdot]$ does not enter the cycle
  $P[x_i,x_{i+1}]\circ \rev{Q[x_i,x_{i+1}]}$. Since $P[x_i,x_{i+1}]$
  is right of $Q[x_i,x_{i+1}]$, it follows that $P$ enters $Q$ from
  the right at $x_{i+1}$. Part~\ref{pt2} follows.\qed
\end{proof}

\begin{figure}[ht]
  \centering
  \subfigure[]
  {
    \includegraphics[scale=0.4]{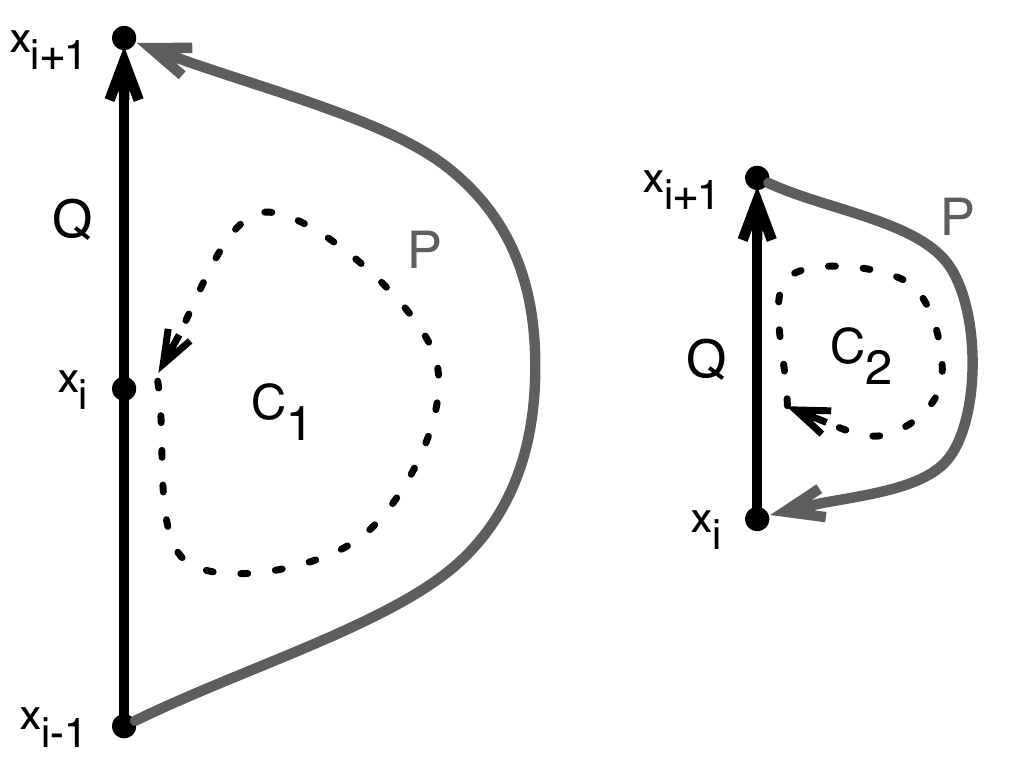}
    \label{fig:spiral_cy}
  }
  \subfigure[]
  {
    \includegraphics[scale=0.4]{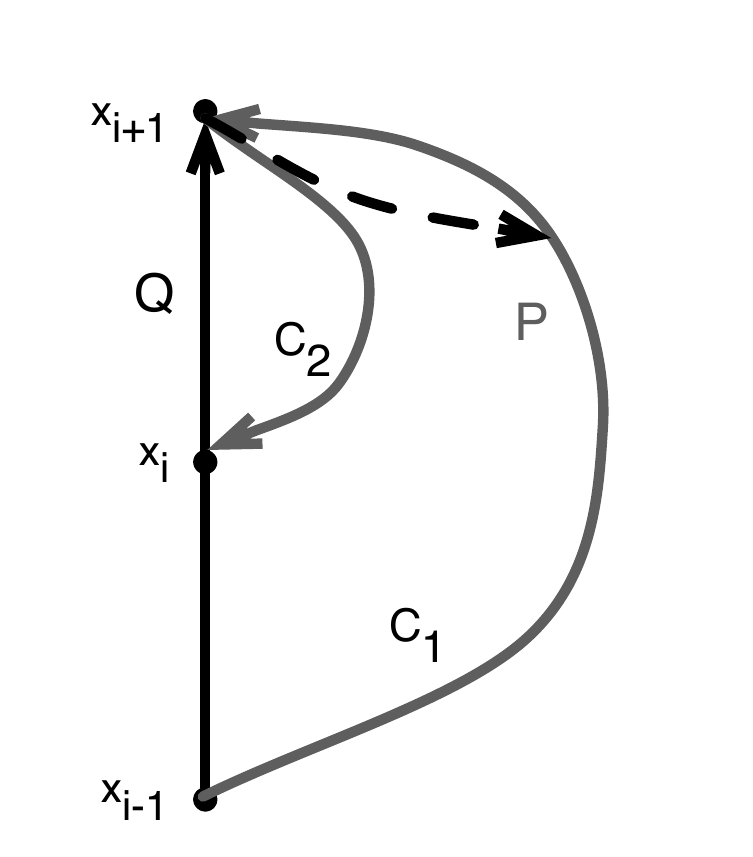}
    \label{fig:spiral1}
  }
  \subfigure[]
  {
    \includegraphics[scale=0.4]{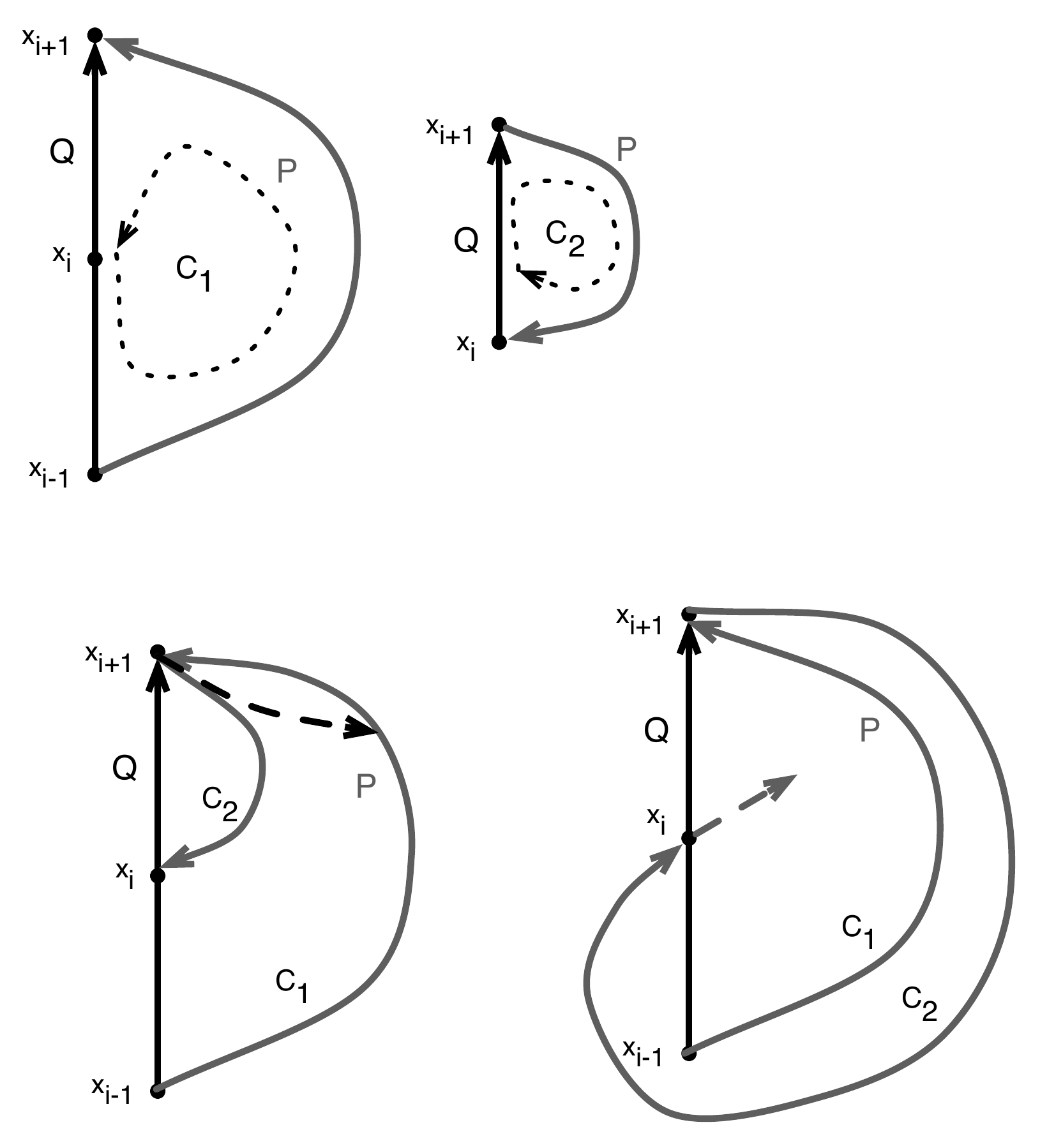}
    \label{fig:spiral2}
  }
  \caption{(a) Cycles $C_1$ and $C_2$ used in the proof of
    Theorem~\ref{thm:spiral}. (b) Case 1: $C_2$ is enclosed
    by $C_1$. (c) Case 2: $C_1$ is enclosed by $C_2$}
  \label{fig:spiral}
\end{figure}

\subsection{An adaptive algorithm}

In our new algorithm {\sc MaxAdaptiveFlow} we find a sequence of capacities
$c_0, c_1, \cdots$ where $c_i$ are the residual capacites of $c_{i-1}$
with respect to some flow.  We refer to this flow as the flow that
takes $c_{i-1}$ to $c_i$.

\begin{tabbing}
  {\sc MaxAdaptiveFlow} ($G$, $s$, $t$, $\cvec$)\\  
  \qquad let $\cvec_0$ be the residual capacities resulting from
  saturating the \cw\\
  \qquad \qquad \qquad residual cycles of $G$ \wrt $c$ \\
  \qquad $i := 0$ \\
  \qquad  while there is a residual \xty{s}{t} path in $G$ \wrt
  $\cvec_i$\\
  \qquad \qquad  let $A_i$ be the leftmost of these paths \\
  \qquad \qquad  let $\cvec_{i+1}$ be the residual capacities resulting from
  saturating the \\
  \qquad \qquad \qquad \qquad leftmost $st$-flow in $G \snip A_i$ \wrt
  $c_i$\\
  \qquad \qquad $i := i+1$\\
 \qquad  return the flow defined by $\fvec[d] = \cvec_i[d] - \cvec[d] \ \forall \text{ darts } d$.\\
\end{tabbing}

Let $\rho$ be the number of iterations of {\sc MaxAdaptiveFlow}.
$A_i$ is an \xty{s}{t} path, $s$ and $t$ are adjacent to $f_{\infty}$ in $G
\snip A_i$, and the leftmost $st$-flow in $G \snip A_i$ reduces to
a single shortest-path computation in the dual graph.  Therefore {\sc
  MaxAdaptiveFlow} can be implemented with $\rho+1$ shortest-path
computations (finding the initial circulation is an additional
shortest-path computation).  {\sc MaxAdaptiveFlow} is correct as it
does not complete until there is no residual $s$-to-$t$ path.  We will
bound $\rho$ by $2\phi+1$, giving:

\begin{theorem}\label{thm:faces}
  {\sc MaxAdaptiveFlow} can be implemented using at most
  $2\phi+2$ shortest-path computations.
\end{theorem}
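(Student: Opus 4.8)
The plan is to show that after at most $2\phi+1$ iterations of the while-loop, there is no residual $s$-to-$t$ path left in $G$ \wrt $\cvec_\rho$; adding one shortest-path computation for the initial circulation and one more to detect termination yields the bound $2\phi+2$. The key object to track is the leftmost residual $s$-to-$t$ path $A_i$ at the start of iteration $i$, together with the initial leftmost residual path $A_0$. Because all the $\cvec_i$ are obtained from \cw-acyclic capacities by saturating flows routed in cut-open graphs $G\snip A_j$, each $\cvec_i$ remains \cw acyclic (saturating a \cw circulation and then the leftmost $st$-flow in a cut-open copy cannot create a new \cw residual cycle — this is exactly the invariant Borradaile and Klein maintain, and it is why we start with $\cvec_0$). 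I would first record this invariant as a small lemma, since Theorems~\ref{thm:no-flow-from-left} and~\ref{thm:spiral} both require \cw-acyclic capacities.

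Next I would establish the geometric heart of the argument: each new leftmost path $A_{i+1}$ crosses $A_0$ strictly more times than $A_i$ did, or more precisely, the crossing structure advances. The flow that takes $\cvec_i$ to $\cvec_{i+1}$ is routed in $G\snip A_i$, so its flow paths do not cross $A_i$; by Theorem~\ref{thm:no-flow-from-left} applied with $L = A_i$, no flow path crosses $A_i$ from left to right, and by the cut-open construction the flow only lives on one side relative to $A_i$ in a controlled way. Consequently, when we pick the new leftmost residual path $A_{i+1}$ \wrt $\cvec_{i+1}$, any place where $A_{i+1}$ would cross $A_i$ is forced by residuality: applying Theorem~\ref{thm:spiral} (Leftmost Crossings) to $P = A_{i+1}$ and $Q = A_i$ — legitimate because $\rev{A_i}$ is residual \wrt $\cvec_{i+1}$, as the leftmost path $A_i$ got saturated only "forward" — we learn that all crossings of $A_{i+1}$ over $A_i$ go right-to-left and occur in the same order along both paths. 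The picture is a spiral winding around the cylinder, and each iteration advances the winding by at least one "copy" in the covering graph $\cal G$.

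To turn this into the bound $\rho \le 2\phi+1$, I would lift the sequence $A_0, A_1, \ldots$ to the cover $\cal G$: fix the isomorphic copy $A_0^0$ and track in which copies $G^j_{A_0}$ the lifted path $\bar A_i$ lives. The advancing-spiral fact says the span of copies touched by $\bar A_i$ shifts monotonically with $i$, while the Pigeonhole Lemma (Lemma~\ref{lem:pigeonhole}) says any single simple $s$-to-$t$ path spans at most $\phi$ copies (the second, sharper clause, since $A_i$ uses only $s,t$ as endpoints). Since the leftmost flow overall is contained in a window of $O(\phi)$ copies (this is precisely the content of Lemma~\ref{lem:convert-to-leftmost}: beyond copy $2\phi+1$ nothing changes), once the spiral has advanced past $2\phi+1$ copies every residual $s$-to-$t$ path has been exhausted, forcing termination; counting, each iteration advances the window by at least one, so $\rho \le 2\phi+1$.

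The main obstacle I expect is making the "advances by at least one copy per iteration" claim fully rigorous: one must show not merely that $A_{i+1}\neq A_i$ but that the crossing sequence $A_{i+1}\otimes A_0$ strictly dominates $A_i\otimes A_0$, i.e. that an iteration cannot "stall" by producing a new leftmost path that winds the same amount. This requires combining the spiral structure of Theorem~\ref{thm:spiral} with a careful accounting of which darts became non-residual in passing from $\cvec_i$ to $\cvec_{i+1}$ — specifically that the saturated leftmost $st$-flow in $G\snip A_i$ blocks the region "just left of $A_i$," so that the next leftmost path is pushed strictly further right (further along the spiral). Handling this cleanly, and checking the boundary bookkeeping at $s$ and $t$ where the left/right tie is broken by the chosen faces $f_s, f_t$, is where the real work lies; the rest is the pigeonhole count already packaged in Lemma~\ref{lem:pigeonhole}.
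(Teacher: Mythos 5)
Your proposal assembles the right ingredients---the clockwise-acyclicity invariant for the $\cvec_i$, the Leftmost Crossings Theorem applied to consecutive paths $A_{i+1}$ and $A_i$, the lift to $\cal G$, and the Pigeonhole Lemma---but you yourself flag the load-bearing step (``advances by at least one copy per iteration'') as unresolved, and the route you sketch for closing it is the wrong one. You propose to track the crossing sequences $A_i \otimes A_0$ and argue that they ``strictly dominate'' from one iteration to the next. This comparison to the fixed initial path $A_0$ is problematic: Theorem~\ref{thm:spiral} requires that $\rev{Q}$ be residual with respect to the \emph{current} capacities, and while $\rev{A_i}$ is residual with respect to $\cvec_{i+1}$ (because the iteration saturates the leftmost $st$-flow in $G \snip A_i$, and $A_i$ is the leftmost such flow path), there is no reason for $\rev{A_0}$ to remain residual with respect to $\cvec_{i+1}$ for $i \ge 1$. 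Moreover, ``domination'' of crossing sequences is an awkward invariant to maintain across many iterations.

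The paper's resolution is to stay with the consecutive comparison throughout and convert crossings into a one-dimensional progress measure in the cover. Lemma~\ref{lem:left-to-right} shows that $A_i$ crosses $A_{i-1}$ at least once and only from right to left (at least once because $A_i$ is residual with respect to $\cvec_i$ yet cannot be a path in $G \snip A_{i-1}$; only right to left by Theorem~\ref{thm:spiral} together with the invariant). Then Lemma~\ref{lem:mapping}, applied to the lift $A_i^0$ ending at $t^0$, translates a single right-to-left crossing of $A_{i-1}$ into the statement that $A_i^0$ contains a subpath going from $A_{i-1}^{j+1}$ to $A_{i-1}^j$ and no subpath in the opposite direction, so the source of $A_i^0$ lies strictly to the right of the source of $A_{i-1}^0$ in $\cal G$. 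That is the strict progress you were missing, and it needs no reference to $A_0$. Since each $A_i^0$ spans at most $\phi$ copies by the Pigeonhole Lemma (second clause), every source must lie within $\phi$ copies of $\Pi^0$'s source, giving at most $2\phi+1$ distinct positions for a strictly increasing sequence, hence at most $2\phi+1$ iterations and $2\phi+2$ shortest-path computations counting the initial circulation. Your appeal to Lemma~\ref{lem:convert-to-leftmost} to bound the ``window'' is unnecessary---it belongs to the finite-cover algorithm of Section~\ref{sec:flow}, and the Pigeonhole Lemma alone supplies the bound you need here.
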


We bound the number of iterations of {\sc MaxAdaptiveFlow} by showing
that the paths $A_0, A_1, \ldots$ first monotonically decrease and
then increase in the number of times they cross $\Pi$, the smallest
set of faces that a curve from $s$ to $t$ drawn on the plane in which
$G$ is embedded must cross.  Notice that $A_i$ must cross $A_{i-1}$ at
least once: since $A_i$ is residual in $G$ \wrt $c_i$, $A_i$ cannot be
a path in $G\snip A_{i-1}$ (for otherwise it would have been augmented
in iteration $i-1$).  We show that {\sc MaxAdaptiveFlow} maintains as
an invariant the absence of clockwise residual cycles; this will imply
that these crossings can only be from right to left (by
Theorem~\ref{thm:spiral}). We relate the crossings between $A_i$ and
$A_{i-1}$ to $A_i$ and $\Pi$ by viewing these paths in the covering
graph of $G$ (as defined in Section~\ref{sec:finite-cover}).

We could, in fact, show that {\sc MaxAdaptiveFlow} implements the
leftmost-augmenting-path algorithm of Borradaile and Klein~\cite{BK09}
insofar as each iteration of {\sc MaxAdaptiveFlow} corresponds to a
consecutive batch of augmentations of the leftmost-augmenting-path
algorithm.  This correspondence would imply the following invariant.
However, it is less cumbersome to prove this invariant directly.

\begin{invariant}\label{inv:no-cw-cycles}
$G$ has no clockwise residual cycles \wrt $\cvec_i$.
\end{invariant}

\begin{proof}[by induction]

  If $i=0$ the invariant holds trivially by definition of $\cvec_0$. For a
  contradiction, assume that $\cvec_j$ is clockwise acyclic, but
  $\cvec_{j+1}$ is not. Let $C$ be a clockwise cycle in $G$ that is
  residual \wrt $\cvec_{j+1}$. Since $G\snip A_j$ is clockwise
  non-residual \wrt $\cvec_{j+1}$, $C$ must use a dart $d$ of
  $\leftd{A_j}$ that enters $A_j$. By the inductive
  hypothesis, $C$ is not residual \wrt $\cvec_j$; let $a$ be its last
  non-residual arc \wrt $\cvec_j$. Let $F$ be an \xty{s}{t} path in the
  flow that takes $\cvec_j$ to $\cvec_{j+1}$ and uses $\rev{a}$ and let $x$
  be the first vertex of $F$ on $C$ after $\head(a)$. Then: $F[\cdot,
  x]\circ C[x, \head(d)]$ is residual \wrt $\cvec_j$ and, since $F$ does not cross
  $A_j$, $F[\cdot, x]\circ C[x, \head(d)]\circ \rev{A_j}[head(d),s]$
  is a clockwise cycle, which contradicts $A_j$ being leftmost
  residual \wrt $\cvec_j$. \qed
\end{proof}

As a consequence of there being no clockwise residual cycles, we show
that the paths $A_0, A_1, A_2, \ldots$ move from left to right.

\begin{lemma}\label{lem:left-to-right}
$A_i$ is left of $A_{i-1}$. $A_i$ crosses $A_{i-1}$ at least once and
only from right to left.
\end{lemma}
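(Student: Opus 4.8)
The plan is to derive both assertions of Lemma~\ref{lem:left-to-right} from the Leftmost Crossings theorem (Theorem~\ref{thm:spiral}), applied with $P=A_i$ and $Q=A_{i-1}$, together with Invariant~\ref{inv:no-cw-cycles}. First I would check that the hypotheses of Theorem~\ref{thm:spiral} hold for $\cvec_i$ in the role of its clockwise-acyclic capacities. By Invariant~\ref{inv:no-cw-cycles}, $\cvec_i$ has no clockwise residual cycles, so $A_i$ --- defined by {\sc MaxAdaptiveFlow} as the leftmost residual $s$-to-$t$ path \wrt $\cvec_i$ --- is exactly the path playing the role of $P$, and it is simple. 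For the role of $Q$ I must verify that $\rev{A_{i-1}}$ is residual \wrt $\cvec_i$. This holds because the flow taking $\cvec_{i-1}$ to $\cvec_i$ is the leftmost maximum $st$-flow of $G\snip A_{i-1}$: the copy of $A_{i-1}$ on the left boundary of $G\snip A_{i-1}$ is itself residual \wrt $\cvec_{i-1}$ and nothing in $G\snip A_{i-1}$ lies to its left, so it is the leftmost residual $s$-to-$t$ path there; hence the leftmost maximum flow routes positive flow along all of it and never routes flow against the left boundary, so every dart of $A_{i-1}$ carries strictly positive flow in that flow (mapped back to $G$) and its reverse is residual \wrt $\cvec_i$, as needed.

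With both hypotheses verified, Part~\ref{pt2} of Theorem~\ref{thm:spiral} immediately gives that $A_i$ crosses $A_{i-1}$ from right to left at every crossing of $A_i\otimes A_{i-1}$ (and Part~\ref{pt1} gives that the crossings occur in the same order along both paths). That $A_i\otimes A_{i-1}$ is nonempty is the observation recorded just before the lemma: if $A_i$ did not cross $A_{i-1}$ it would map to a residual $s$-to-$t$ path of $G\snip A_{i-1}$ \wrt $\cvec_i$, contradicting that iteration $i-1$ saturated the \emph{maximum} $st$-flow of $G\snip A_{i-1}$. This establishes the second sentence of the lemma.

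It remains to prove the first sentence, that $A_i$ is left of $A_{i-1}$, equivalently that $A_i\cup\rev{A_{i-1}}$ is a clockwise circulation; I expect this to be the main obstacle, since it upgrades the local statement ``every crossing is right to left'' to the global statement ``left of''. I would cancel the darts common to $A_i$ and $\rev{A_{i-1}}$, decompose what remains into simple cycles, and show each such cycle $C$ is clockwise: $C$ is bounded by alternating maximal subpaths of $A_i$ and of $\rev{A_{i-1}}$ that meet at shared vertices, and by Part~\ref{pt2} every interior junction at which the two paths actually cross is a right-to-left crossing, so a local case analysis at the junctions of $C$ --- separating genuine crossings from mere touchings, and using the fixed reference faces $f_s,f_t$ to resolve the single junction that can occur at $s$ or $t$ --- shows $C$ winds clockwise. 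Alternatively, one can lift $A_i$ and $A_{i-1}$ into the covering graph $\cal G$ of Section~\ref{sec:universal-cover}: with the copies $A_{i-1}^0,A_{i-1}^1,\ldots$ ordered left to right, the right-to-left crossing structure forces a suitable copy of $A_i$ to have a non-increasing ``copy index'' along its length, hence to end weakly to the left of where $A_{i-1}^0$ ends, which by the correspondence of Observation~\ref{obs:iso} and Lemma~\ref{lem:maps-of-clockwise-cycles} is exactly the condition that $A_i\cup\rev{A_{i-1}}$ is clockwise. The delicate but routine points are handling subpaths shared by $A_i$ and $A_{i-1}$ and the endpoint tie-breaking, mirroring the bookkeeping in the proofs of Lemmas~\ref{lem:mapping} and~\ref{lem:maps-of-clockwise-cycles}.
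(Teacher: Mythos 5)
Your opening moves match the paper's proof: you verify that $\rev{A_{i-1}}$ is residual \wrt $\cvec_i$ by noting that the leftmost max flow in $G\snip A_{i-1}$ must route flow along the left boundary copy of $A_{i-1}$; you cite the observation preceding the lemma that $A_i\otimes A_{i-1}\neq\emptyset$; and you invoke Theorem~\ref{thm:spiral} (Part~\ref{pt2}) to get that every crossing is right-to-left. That is exactly the first half of the paper's argument.

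The gap is in the step you yourself flag as ``the main obstacle,'' establishing that $A_i\cup\rev{A_{i-1}}$ is a clockwise circulation, and the paper's argument here is shorter and qualitatively different from either of your sketches: it does not read anything off the crossing geometry. Writing $x_0=s$, $x_{m+1}=t$ and letting $x_j$ be a vertex of the $j$-th crossing, the paper observes that the cycle $A_i[x_j,x_{j+1}]\circ\rev{A_{i-1}[x_j,x_{j+1}]}$ is \emph{itself residual} \wrt $\cvec_i$ (because both $A_i$ and $\rev{A_{i-1}}$ are), so Invariant~\ref{inv:no-cw-cycles} pins its orientation outright; summing these subcycle circulations gives the global claim. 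Your first sketch tries to deduce each subcycle's orientation purely from the right-to-left crossings at its two junctions, but that local data does not force a single orientation for the lens in between --- immediately after a right-to-left crossing $A_i$ lies on the left of $A_{i-1}$, yet just before the next one it must re-enter from the right, so the intervening region winds around $s$ and its orientation is a genuinely global fact; making the case analysis rigorous would in effect re-derive Lemma~\ref{lem:maps-of-clockwise-cycles}. Your second sketch via copy indices in $\cal G$ is closer, but to turn a ``wrong-direction'' index change into a contradiction you still need a residual clockwise cycle, which is exactly what the Invariant hands you for free. In short, the missing ingredient in both sketches is the simple observation that the cycles between consecutive crossings are residual \wrt $\cvec_i$, so Invariant~\ref{inv:no-cw-cycles} applies to them directly --- you mention the Invariant in your preamble but never bring it to bear on this step.
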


\begin{proof}
   First we observe that $\rev{A_{i-1}}$ is residual \wrt $\cvec_i$.  We find a
  leftmost maximum flow in $G \snip A_{i-1}$, an $st$-planar graph, and 
  since $A_{i-1}$ is residual \wrt $\cvec_{i-1}$, the flow we find is
  non-trivial.  The leftmost of these flow paths must indeed by $A_{i-1}$.

  $A_i$ crosses $A_{i-1}$.  Then, since $A_i$ is leftmost residual, we
  refer to the properties guaranteed by Theorem~\ref{thm:spiral},
  proving the second part of the lemma.

  Let $A_i \otimes A_{i-1} = \{X_1, X_2, \ldots, X_{|A_i \otimes
    A_{i-1}|}\}$ and define $X_0 = s, X_{|A_i \otimes A_{i-1}|+1} =
  t$. Let $x_j$ be any vertex in $X_j$. Consider the subpaths
  $A_i[x_j,x_{j+1}]$ and $A_{i-1}[x_j,x_{j+1}]$. The cycle
  $A_i[x_j,x_{j+1}]\circ \rev{A_{i-1}[x_j,x_{j+1}]}$ is residual, and
  by Invariant~\ref{inv:no-cw-cycles} cannot be clockwise. Therefore
  $A_i[x_j,x_{j+1}]$ is left of $A_{i-1}[x_j,x_{j+1}]$. \qed
\end{proof}

\begin{proof}[of Theorem~\ref{thm:faces}]
  Since $A_i$ crosses $A_{i-1}$ at least once and from right to left.
  Therefore there exists a subpath of $A_i$ such that $Y$ leaves
  $A_{i-1}$ from the left and enters $A_{i-1}$ from the right and
  there is no subpath of $A_i$ that leaves $A_{i-1}$ from the right
  and enters $A_{i-1}$ from the left.  Then by
  Lemma~\ref{lem:mapping}, $A_{i}^0$ must contain at least one
  subpath from $A_{i-1}^{j+1}$ to $A_{i-1}^j$ for some $j$ (and no
  $A_{i-1}^{j}$-to-$A_{i-1}^{j+1}$ subpaths, for any $j$).  Therefore
  we make progress in the following sense:
  \begin{quote}
    $A_{i}^0$ starts at a source strictly to the right of the source that $A_{i-1}^0$ starts at in $\cal G$.
  \end{quote}
  From the Pigeonhole Lemma~\ref{lem:pigeonhole}, each $A_i^0$ can go
  through at most $\phi$ copies of $G$ in ${\cal G}$.  Therefore
  $A^0_i$ must start at a source within $\phi$ isomorphic copies of $G
  \snip \Pi$ of the source that $\Pi^0$ starts at (when $\pi$ is an
  artificial path as in the proof of the Pigeonhole Lemma).  It
  follows that the number of iterations of {\sc MaxAdaptiveFlow} is at
  most $2\phi+1$. \qed
\end{proof}

Note that the bound in the above argument is not tight.  The progress could be much greater in each
iteration of {\sc MaxAdaptiveFlow}.

\end{document}